\newcommand{\Reals}{{\mathbb R}}         %
\newcommand{\Co}{{\mathbb C}}         %
\newcommand{\half}{\frac{1}{2}}         %
\newcommand{\Arg}{\operatorname{Arg}}
\newcommand{\KDelta}{\Delta}
\newcommand{\rs}{r_*}
\newcommand{\RadialOp}{\mathbf R}
\newcommand{\AngularOp}{\mathbf S}
\newcommand{\Rhor}{R_{\text{hor}}}
\newcommand{\Rout}{R_{\text{out}}}
\newcommand{\rsharp}{r_\sharp}
\newcommand{\TMESolu}{\Phi}
\newcommand{\Cont}{\mathcal{C}}
\newcommand{\Hankel}{\mathcal{H}}
\newcommand{\xiplus}{\xi}
\newcommand{\ximinus}{\eta}
\newcommand{\Rcal}{\mathcal{R}}
\newcommand{\Tcal}{\mathcal{T}}
\newcommand{\deltaplus}{\delta}
\newcommand{\deltaminus}{\gamma}
\newcommand{\sfrak}{\mathfrak{s}}
\newtheorem{thm}{Theorem} 
\newtheorem{cor}[thm]{Corollary} 
\newtheorem{lemma}[thm]{Lemma} 
\newtheorem{definition}[thm]{Definition} 
\newtheorem{remark}[thm]{Remark} 
\begin{document}

\title{Mode stability on the real axis}

\author{Lars Andersson}
\email{lars.andersson@aei.mpg.de}
\author{Siyuan Ma}
\email{siyuan.ma@aei.mpg.de}
\author{Claudio Paganini}
\email{claudio.paganini@aei.mpg.de}
\affiliation{Albert Einstein Institute, Am M\"uhlenberg 1, D-14476 Potsdam,  Germany }
\author{Bernard F. Whiting}
\email{bernard@phys.ufl.edu}
\affiliation{Department of Physics, University of Florida, 2001 Museum Road, Gainesville, FL 32611-8440, USA}

\pacs{04.70.Bw,03.65.Pm,03.65.Nk}

\date{9 September 2016} 

\begin{abstract}
A generalization of the mode stability result of \cite{whiting:1989} for the Teukolsky equation is proved for the case of real frequencies. The main result of the paper states that a separated solution of the Teukolsky equation governing massless test fields on the Kerr spacetime, which is purely outgoing at infinity, and purely ingoing at the horizon, must vanish. This has the consequence, that for real frequencies, there are linearly independent fundamental solutions of the radial Teukolsky equation $\Rhor, \Rout$, which are purely ingoing at the horizon, and purely outgoing at infinity, respectively. This fact yields a representation formula for solutions of the inhomogenous Teukolsky equation.
\end{abstract}

\maketitle

\tableofcontents

\section{Introduction}

The field equations on the Kerr spacetime for massless test fields with spins
$\sfrak$ between 0 and 2 
imply that the scalar components with extreme spin weights $s = \pm \sfrak$ solve the Teukolsky Master Equation (TME) \citep{teukolsky:1973}, a separable, spin-weighted wave equation.
Let
\begin{align*}
{\mathbf L} &= \partial_r \KDelta \partial_r
- \frac{1}{\KDelta} \left \{ (r^2+a^2) \partial_t + a \partial_\phi - (r-M)s
\right \}^2 \\
&\quad - 4s (r+ia\cos\theta)\partial_t + \frac{1}{\sin\theta} \partial_\theta
\sin\theta \partial_\theta
\\
&\quad
+ \frac{1}{\sin^2\theta} \left \{ a \sin^2\theta \partial_t + \partial_\phi +
is \cos\theta \right \}^2
\end{align*}
where $(t,r,\theta,\phi)$ are Boyer-Lindquist coordinates and $\Delta = r^2 -2Mr+a^2$. Then \citep{whiting:1989}
\begin{equation}\label{eq:TME}
{\mathbf L} \TMESolu = 0
\end{equation}
is a form of the TME on the Kerr exterior background with mass $M$ and angular momentum per unit mass $a$. The parameter $s$ is the spin weight of the field $\TMESolu$.

For completeness, we recall the defininition of the fields $\TMESolu$ solving the TME. In order to do this, we make the spin weight explicit as a subindex. For $s=0$, the TME is equivalent to the scalar wave equation $\nabla^a \nabla_a \TMESolu_0 = 0$. For 
spins $\sfrak = 1/2, 1, 3/2, 2$ the field equations are Dirac-Weyl, Maxwell, Rarita-Schwinger, and linearized gravity, respectively. For spins $0, 1/2, 1, 2$, see \cite{teukolsky:1973}, for the spin-3/2 case, \cite{1992PhRvD..46.5395T}, see also \cite{1995JMP....36.6929S}.
Working in the Kinnersley principal tetrad, let $\phi_0$, $\phi_2$ be the Newman-Penrose scalars of spin weights $1, -1$ for a Maxwell test field on the Kerr background, and let  $\dot \Psi_0, \dot \Psi_4$ denote the linearized Weyl scalars of spin weights $2, -2$ for a solution of the linearized vacuum Einstein equations on the Kerr background, see \cite{2011CQGra..28f5001A} for details. Let the scalar $\zeta$ be chosen so that $\zeta \propto \Psi_2^{-1/3}$, where $\Psi_2$ is the spin weight zero Weyl scalar. In Boyer-Lindquist coordinates, we can take $\zeta = r-ia\cos\theta$. The scalar fields $\TMESolu_s$ for integer $s$ are defined by setting
\begin{subequations}\label{eq:TMESolu-def}
\begin{alignat}{2}
\TMESolu_{-2} &= \Delta^{-1} \zeta^4 \dot \Psi_4, &\quad \TMESolu_{-1} &= \Delta^{-1/2} \zeta^2 \phi_2, \\
\TMESolu_1     &= \Delta^{1/2} \phi_0, &\quad \TMESolu_2 &= \Delta \dot \Psi_0
\end{alignat}
Similarly, let $\chi_0, \chi_1$ denote the scalars of spin weights $\pm 1/2$ for a Dirac-Weyl test field, and $H_0, H_3$ the scalars of spin weights $\pm 3/2$ for a Rarita-Schwinger test field. We define
\begin{alignat}{2}
\TMESolu_{-3/2} &= \Delta^{-3/4} \zeta^3 H_3, &\quad \TMESolu_{-1/2} &= \Delta^{-1/4} \zeta \chi_1, \\
\TMESolu_{1/2}     &= \Delta^{1/4} \chi_0, &\quad \TMESolu_{3/2} &= \Delta^{3/4} H_0
\end{alignat}
\end{subequations}

The TME admits separated solutions of the form
\begin{equation}\label{eq:Psi-sep}
\TMESolu = e^{-i\omega t} e^{im\phi} S(\theta)R(r).
\end{equation}
where $\omega, m$ are the frequencies corresponding to the Killing vector fields $(\partial_t)^a$, $(\partial_\phi)^a$. Let
\begin{equation}\label{eq:Kdef}
K = (r^2+a^2)\omega - am
\end{equation}
Then with
\begin{align}
\RadialOp ={}& \partial_r \KDelta \partial_r + \frac{K^2 - 2i K(r-M) s - (r-M)^2 s^2 }{\KDelta} + 4sir\omega - \Lambda \label{eq:RadialOp}\\
\AngularOp ={}&
\frac{1}{\sin\theta} \partial_\theta \sin\theta \partial_\theta  - \frac{m^2}{\sin^2\theta} + a^2 \cos\theta^2 \omega^2 - 2 a \omega s  \cos\theta - \frac{2ms\cos\theta }{\sin^2\theta} - s^2 \cot^2 \theta  \nonumber \\
&{}+ \Lambda + 2a \omega m - a^2 \omega^2 ,\label{eq:AngularOp}
\end{align}
where $\Lambda$ is a separation constant, which can be assumed to be real for real $\omega$, we have after making the substitutions $\partial_t \leftrightarrow -i\omega$, $\partial_\phi \leftrightarrow im$,
\begin{align*}
\mathbf{L} ={}& \RadialOp + \AngularOp,
\intertext{and}
[\RadialOp, \AngularOp] ={}& 0 .
\end{align*}
In particular, $\RadialOp, \AngularOp$ are commuting symmetry operators for $\mathbf{L}$. It follows from the above that for separated waves of the form \eqref{eq:Psi-sep}, \eqref{eq:TME} is equivalent to the equations $\RadialOp R = 0$, $\AngularOp S = 0$.
We shall refer to the equations
\begin{subequations}\label{eq:radang}
\begin{align}\label{eq:radialeq}
\RadialOp R ={}& 0 \\
\AngularOp S ={}& 0 \label{eq:angulareq}
\end{align}
\end{subequations}
as the radial and angular Teukolsky equations, respectively. We shall not be concerned with the analysis of the angular Teukolsky equation here, but point out that $\AngularOp$ is formally self-adjoint on $[0,\pi]$ with respect to $\sin\theta d\theta$. Requiring that the solutions correspond to regular spin-weighted functions fixes the boundary conditions at $\theta=0,\pi$ and equation \eqref{eq:angulareq} becomes a Sturm-Liouville problem which has a discrete, real spectrum\footnote{The separation constant used here is related to that used in \cite{1974ApJ...193..443T} by $\Lambda + 2a\omega m - a^2 \omega^2 = E-s^2$, and to the one used in \cite{whiting:1989} and \cite{1972PhRvL..29.1114T} by $\Lambda + 2a\omega m - a^2 \omega^2 = A+s$.}; see \cite{1986JMP....27.1238L} for more details.

For fields of non-zero spin, the TME does not admit a real action, and hence standard arguments do not yield energy conservation and dispersive estimates. This is an obstacle to proving stability for the test fields with non-zero spin on the Kerr exterior spacetime, which would be an important step towards proving non-linear stabilty of the Kerr black hole,  i.e. that a Kerr black hole with $|a| < M$ is dynamically stable as a solution to the vacuum Einstein field equations, in the sense that the maximal development of a sufficiently small perturbation of the Kerr solution is asymptotic in the future to a member of the Kerr family.

In \cite{whiting:1989}, one of the authors gave a proof of mode stability. In particular, the TME has no separated wave solutions, or modes, which are such that the frequency $\omega$ has positive imaginary part, and which have no incoming radiation in the sense that the wave is outgoing at infinity, and ingoing at the horizon, see section \ref{sec:noincoming}. The main result of \cite{whiting:1989} states that the TME admits no exponentially growing solutions without incoming radiation. In the case of $\Im \omega > 0$, the condition of no incoming radiation can be restated as saying that the solution has support only on the future horizon and null infinity. 
On the other hand, there do exist mode solutions with no incoming radiation for certain frequencies with negative imaginary part. This case corresponds to quasi-normal modes \citep{1999LRR.....2....2K}, which are exponentially decaying in time.

It is known that exponentially growing modes must arise by quasi-normal frequencies passing from the lower half plane through the real axis into the upper half plane as $a$ is changed from zero. This was argued heuristically by 
Press and Teukolsky 
\citeyearpar[p. 651]{1973ApJ...185..649P} and later shown by 
Hartle and Wilkins 
\citeyearpar{1974CMaPh..38...47H}, see also \citet[p. 452]{1974ApJ...193..443T}. For this reason, the mode stability problem can be reduced to considering the case of real frequencies.

Recently, the mode stability argument has been revisited for the case of real frequencies, restricting to the spin-0 case \citep{2015AnHP...16..289S}.  In the case of real frequencies, the mode stability result states that restricting to separated waves with no incoming radiation in the above sense, the radial Teukolsky equation has no non-trivial solutions. This has the consequence that there are linearly independent solutions $\Rhor, \Rout$ which are purely ingoing at the horizon and outgoing at infinity, respectively, a fact which plays a central role in the proof of boundedness and decay for scalar waves on sub-extreme Kerr exterior spacetimes with $|a| < M$  \citep{2014arXiv1402.7034D}, in particular it is used to treat the superradiant range of frequencies; see section \ref{sec:superradiance} for background on superradiance. Some aspects of the argument in \cite{2015AnHP...16..289S} are discussed in Remark \ref{rem:yakov} below; see also the recent paper  \citep{2016arXiv160608005F}  which contains results implying mode stability on the real axis, proven using methods which are quite different from those used here.

Motivated by the relevance of the TME for the black hole stability problem we here give a proof of mode stability on the real axis for fields with arbitrary spin. Our main result is the following, cf. Theorem \ref{thm:realmodestab} below.

\begin{thm}[Mode stability on the real axis] \label{thm:main-intro} Let $\TMESolu$ be a separated solution to the TME with $\omega > 0$ for the sub-extreme Kerr black hole. Assume that $\TMESolu$ has purely ingoing radiation at the horizon and purely outgoing radiation at infinity. Then $\TMESolu = 0$.
\end{thm}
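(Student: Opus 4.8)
The plan is to follow and extend the complex-analytic strategy of Whiting's original mode stability proof to the critical case of real frequencies. The essential idea in the $\Im\omega > 0$ case is to convert the radial Teukolsky equation into a form where one can apply a positivity/energy argument: one constructs an integral transform (a Whiting-type transformation involving kernels built from Bessel-type or confluent hypergeometric functions) that maps a solution of the radial equation with no incoming radiation into a solution of a related equation, and then shows via a conserved Wronskian-type quantity or an $L^2$ positivity estimate that the transformed solution, and hence the original, must vanish. For real $\omega$ the obstruction is that the exponential decay/growth in $r_*$ that made the boundary terms in Whiting's argument vanish is replaced by purely oscillatory behavior at both the horizon and infinity, so I cannot simply invoke decay to kill boundary contributions.

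\medskip

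First I would set up the asymptotic analysis of solutions to the radial equation \eqref{eq:radialeq} at the horizon $r \to r_+$ and at $r\to\infty$, using the tortoise coordinate $r_*$, and make precise the meaning of ``purely ingoing at the horizon'' and ``purely outgoing at infinity'' as the statement that $R$ behaves like a single outgoing/ingoing exponential $e^{\pm i\omega r_*}$ (modulo the spin-dependent power of $\Delta$ and $r$) with no admixture of the opposite mode; the function $K$ from \eqref{eq:Kdef} controls the horizon frequency $\omega - m a/(2Mr_+)$ that governs superradiance. Second, following Whiting, I would apply the integral transformation to produce a new solution $\tilde R$ of a transformed radial equation and carefully track how the one-sided radiation conditions transform. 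Third, I would compute the relevant bilinear (Wronskian or energy current) quantity, integrate over $(r_+,\infty)$, and attempt to show it is simultaneously (i) conserved or sign-definite in the interior and (ii) forced to vanish by the boundary behavior. The key algebraic input is the commuting structure $[\RadialOp,\AngularOp]=0$ together with the self-adjointness of $\AngularOp$, which lets me treat $\Lambda$ as a fixed real spectral parameter.

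\medskip

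The hard part will be controlling the boundary terms at $r=r_+$ and $r=\infty$ when $\omega$ is real: unlike the exponentially growing/decaying case, the transformed solution will generically only decay polynomially or oscillate, so the integrals defining the energy current may only be conditionally convergent, and the boundary contributions do not automatically drop. I expect to handle this by a limiting/regularization argument — either deforming $\omega$ slightly off the real axis and taking a limit, or inserting a cutoff and showing the boundary terms have a definite sign or cancel due to the specific phase relations imposed by the simultaneous ingoing/outgoing conditions. A second subtlety is the superradiant range, where the horizon frequency and $\omega$ have opposite signs; here the naive energy flux argument fails sign-definiteness, and it is precisely the Whiting transformation that is expected to restore a usable positivity structure. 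Establishing that the transformed boundary data genuinely vanish in this regime, rather than merely that some flux is non-negative, is where the proof must do real work, and I would expect this to require a delicate analysis of the asymptotics of the transform kernel against the oscillatory solutions.
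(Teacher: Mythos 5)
You have the right skeleton — a Whiting-type integral transform producing a solution of a non-superradiant equation, followed by a conserved-Wronskian/positivity argument — and you correctly locate the central difficulty: for real $\omega$ the no-incoming-radiation solution is purely oscillatory along the real axis, so a transform taken over $r\in[r_+,\infty)$ is not absolutely convergent and the boundary terms \eqref{eq:bc} do not vanish. But neither of your proposed fixes is what resolves this, and both would fail. The paper's key idea is to deform the \emph{contour of integration into the complex $r$-plane}, not to regularize along the real axis or perturb $\omega$. Because the outgoing solution is the \emph{recessive} normal solution in the upper half $r$-plane (Lemma \ref{lem:R-asympt}) and the exponential kernel in \eqref{eq:inttrans} decays as $\Im r\to\infty$, choosing $\Cont$ to be the half-line $[r_+,i\infty)$ (for $s\le 0$) or a rotated Hankel-type contour encircling $r_+$ and running to $i\infty$ (for $s>0$) makes the transform absolutely convergent and makes the boundary terms \eqref{eq:bc} vanish identically, so $\tilde g$ genuinely solves \eqref{eq:transformedHeun}. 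The endpoint data of the transformed solution $U$ are then computed in closed form from Euler's and Hankel's Gamma-function integrals (Corollaries \ref{cor:Ulimit-Hankel} and \ref{cor:Ulimit-Euler}), giving $\lim_{x\to\infty}|U|=C\lim_{r\to r_+}\Delta^{s/2}|R|$ with $C\neq 0$; conservation of $W[U,\bar U]$ for the real potential \eqref{eq:Vtransf} then forces both limits to vanish, hence $U\equiv 0$ and $R\equiv 0$. Indeed, Remark \ref{rem:yakov} of the paper is an explicit warning that the real-axis contour $[r_+,\infty)$ — the setting your regularization ideas would live in — fails at exactly this point.

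Concretely, why your two fixes do not close the argument. Deforming $\omega$ off the real axis: by Whiting's original theorem there are \emph{no} nontrivial no-incoming-radiation solutions for $\Im\omega>0$, so there is no family of solutions to pass to the limit; what one would really need is that the Wronskian of $\Rhor$ and $\Rout$, nonvanishing in the open upper half-plane, stays bounded away from zero as $\omega$ approaches the real axis — but zeros of a holomorphic function may accumulate at the boundary of its domain, so this requires quantitative estimates that are essentially equivalent to the theorem being proved. Cutoffs with sign or phase cancellation: in the superradiant range \eqref{eq:superrad-cond} the horizon flux of the original equation genuinely has the wrong sign (this is physics, not an artifact of conditional convergence), so no truncation argument in the original variables restores sign-definiteness; and in the transformed variables, unless the boundary terms \eqref{eq:bc} vanish exactly, $\tilde g$ does not satisfy the transformed equation at all, so there is no conserved Wronskian to which a sign argument could be applied. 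Finally, you do not address the case $s>0$, which needs separate treatment: the paper either reduces it to spin $-s$ via the Teukolsky--Starobinsky identity \eqref{eq:TSI-s}, or runs the Hankel-contour version of the argument and disposes of the resonant case $\xiplus=0$ using the scattering relation \eqref{eq:scattering} together with non-vanishing of the constants $C_\sfrak$.
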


\begin{remark}
A classical scattering argument can be used to show mode stability on the real axis for  half-integer spins, or for frequences outside of the superradiant range, see equation \eqref{eq:superrad-cond} below. 
The proof of mode stability on the real axis presented in this paper is independent of that scattering argument. 

\end{remark}

The fact that there are no solutions to the TME with no incoming radiation has the important consequence that the radial Teukolsky equation has two fundamental solutions $R_{\text{hor}}$ and $R_{\text{out}}$ which are ingoing at the horizon, and outgoing at infinity, respectively, and are linearly independent, with non-vanishing Wronskian. This implies that one can construct solutions of the inhomogenous Teukolsky equation using the method of variation of the parameter, see Remark \ref{rem:inhomog} below for details. The properties of the solutions $R_{\text{hor}}$ and $R_{\text{out}}$ can be used to estimate the solution of the inhomogenous Teukolsky equation.

Unless otherwise stated, we shall in the rest of the paper restrict to positive frequency, $\omega > 0$. Note that the substitution $(\omega,m,s) \to (-\omega,-m,-s)$ maps solutions of TME to solutions.

In case $\omega=0$, \eqref{eq:Psi-sep} represents a time independent solution of the TME. In this case, the radial Teukolsky equation becomes a hypergeometric equation with three regular singular points $r_-, r_+, \infty$ which requires a separate discussion. It has been pointed out that this equation does not have solutions which are well-behaved at the horizon and at infinity, see \citet[p. 1117]{1972PhRvL..29.1114T}, see also \citet[p. 651]{1973ApJ...185..649P}.

\subsection*{Overview of this paper} In section \ref{sec:radialteuk} we collect some background on the radial Teukolsky equation and discuss the asymptotic behavior of its solutions in section \ref{sec:asymptotics}. Lemma \ref{lem:R-asympt} collects the facts about solutions with no incoming radiation which we shall need for the proof of our main result. The phenomenon of superradiance is reviewed in section \ref{sec:superradiance}. This analysis yields the previously known fact that for non-superradiant frequencies and for half-integer spins, mode stability holds. Section \ref{sec:inttrans} introduces the integral transformation which will be used along the lines of \cite{whiting:1989} to transform the radial Teukolsky equation to a non-superradiant equation. This transformation is the essential step in the proof of mode stability. The limiting behavior of the transformed radial function is analyzed in section \ref{sec:limits}, and the proof of the main theorem is given in section \ref{sec:modestab}.

\section{The radial Teukolsky equation} \label{sec:radialteuk}
The radial Teukolsky equation is a second order ordinary differential equation with rational coefficients, and an analysis of its singular points yields that it is a confluent Heun equation \citep{slavyonov:lay,ronveaux:heun},  with regular singular points at $r_-, r_+$ and an irregular singular point of rank\footnote{Here we use the notion of rank following \citet[p. 60]{MR0078494}. The s-rank of $r=\infty$ as in \citep{slavyonov:lay} is $2$.} $1$ at $\infty$. In this context it is natural to consider the radial Teukolsky equation on the complex $r$-plane. In a neighborhood of each regular singular point, the general solution is a linear combination of the fundamental Frobenius solutions, while at the irregular singular point, one considers the Thom\'e, or normal, solutions.  These are formal solutions but in each Stokes sector, a pair of solutions can be found which are asymptotically represented by the normal solutions, cf. \citet[Chapter III]{MR0078494} for details, see also \cite{MR1429619}.

Equation \eqref{eq:radialeq} with $\RadialOp$ given by \eqref{eq:RadialOp} is in self-adjoint form. 
The form of the equation and its solutions can be changed by transformations of the independent variable $r$ (eg. M\"obius transformations),  s-homotopic transformations (rescalings),  and integral transformations of the dependent variable. In the rest of this section, as a preparation for the integral transformations considered in section \ref{sec:inttrans}, we will state a few basic facts about the radial Teukolsky equation and its solutions.

The rotation speeds $\omega_{\pm}$ and surface gravities $\kappa_{\pm}$ of the outer and inner horizons located at $r_{\pm} = M \pm \sqrt{M^2 - a^2}$
are given by
\begin{align}
\omega_\pm ={}& a/(2Mr_{\pm}) , \label{eq:omegapm} \\
\kappa_{\pm} ={}& \pm \frac{r_+ - r_-}{4Mr_\pm} \label{eq:kappapm}
\end{align}
Associated to null generators of the horizons,
\begin{equation}\label{eq:chi-def}
\chi_\pm^a = (\partial_t)^a + \omega_\pm (\partial_\phi)^a,
\end{equation}
we have for a wave with time frequency $\omega$ and azimuthal frequency $m$, the frequencies
\begin{align} \label{eq:kpm-def}
k_{\pm} ={}& \omega - \omega_\pm m .
\end{align}
The superradiant range of frequencies is characterized, independent of the signs of $\omega$, $m$ and $a$, by 
\begin{equation} \label{eq:superrad-cond}
\omega k_+ < 0
\end{equation}
The tortoise coordinate $\rs$ is defined by
\begin{align}
d\rs ={}& \frac{r^2+a^2}{\KDelta} dr
\end{align}
We have
\begin{align}
\rs \sim{}&  \frac{1}{2\kappa_+} \ln(r-r_+), \quad \text{ for $r \to r_+$} \\
\rs \sim{}& r + 2M\ln(r), \quad \text{ for $r \to \infty$}
\end{align}
Let $\rsharp$ be defined by
$$
d\rsharp = \frac{a}{\Delta} dr
$$
and set $u_\pm = t\pm r_*$ and $\phi_{\pm} = \phi \pm \rsharp$. Then $(u_+, r, \theta, \phi_+)$ and $(u_-,r, \theta,\phi_-)$ are the ingoing and outgoing Eddington-Finkelstein coordinates, respectively. The ingoing coordinates are regular on the future horizon, while the outgoing ones are regular on the past horizon.

For later use, it will be convenient to introduce the quantities
\begin{subequations}\label{eq:xietadef}
\begin{align}
\xiplus ={}& \frac{i (am - 2Mr_+ \omega)}{r_+ - r_-} = -  i \frac{k_+}{2\kappa_+} \label{eq:xidef} \\
\ximinus ={} & \frac{-i(am-2Mr_-\omega)}{r_+-r_-} = - i \frac{k_-}{2\kappa_-} \label{eq:etadef}
\end{align}
\end{subequations}

\subsection{Asymptotics} \label{sec:asymptotics}
We shall now discuss the asymptotic behavior of the solutions of the radial Teukolsky equation \eqref{eq:radialeq} at the singular points. The radial function $R$ used here corresponds to the function $R$ used in \citep{teukolsky:1973} and \citep{1974ApJ...193..443T}, multiplied by a factor $\Delta^{s/2}$.
The asymptotic behavior of $R$ which we state below can be read off from  \cite[Table 1]{1974ApJ...193..443T} after taking the factor $\Delta^{s/2}$ into account, see also \cite{2013CQGra..30p5005C} for discussion of asymptotics. For completeness we shall indicate the derivation of these results.

\subsubsection{Asymptotics at $r=\infty$} \label{sec:asympt-infty}
In order to analyze the possible asymptotic behavior of solutions to the radial Teukolsky equation at infinity, it is convenient to transform the equation to normal form.
Write \eqref{eq:radialeq} as
\begin{equation}\label{eq:self-adj-R}
\partial_r \Delta \partial_r R + V_0 R = 0
\end{equation}
and let $G=\Delta^{-1/2}$. The rescaling $R=yG$ transforms \eqref{eq:self-adj-R} to the form
\begin{equation}\label{eq:y-normal}
\partial_r^2 y+ Q y = 0
\end{equation}
with
$$
Q = -\half \partial_r \frac{\partial_r \Delta}{\Delta} - \tfrac{1}{4} \left ( \frac{\partial_r \Delta}{\Delta} \right)^2 + \frac{V_0}{\Delta}
$$
The leading terms in $Q$ at $r=\infty$ are
$$
Q = \omega^2 +\frac{2i\omega s + 4 \omega^2 M}{r} + O(r^{-2}) .
$$

From this we can determine following \citet[Chapter III]{MR0078494} that the two normal solutions to the radial Teukolsky equation \eqref{eq:radialeq}
near $r=\infty$
have the asymptotic forms
\begin{equation}\label{eq:R-asympt}
R\sim e^{\pm i\omega r} r^{\pm 2iM\omega} r^{\mp s -1}
\end{equation}
with the upper sign corresponding to outgoing waves. Due to the fact that the singular point at $r=\infty$ is irregular, of rank $1$, we shall need some further details concerning the Stokes phenomenon. As mentioned above, the rank of the irregular singular point $r=\infty$ is $1$. Following \citet[Chapter III.5]{MR0078494} we find that the Stokes line is the real line $\Im r = 0$ in the complex $r$-plane. The Stokes line decomposes the complex $r$-plane into two Stokes sectors, and in each Stokes sector, one of the two normal solutions is exponentially increasing, and one is exponentially decreasing. These are referred to as the \emph{dominant} and \emph{recessive} solutions.

In particular, if we consider a sectorial region $S$, $|r| > A$, $\alpha < \Arg r < \beta$, contained in a Stokes sector, the asymptotic expansions of the two normal solutions hold uniformly in $\Arg r$, for $r \in S$. For later use, we note here that for a sectorial region $S$ in the upper half plane, with $0 < \alpha < \pi/2 < \beta < \pi$ as in figure \ref{fig:sector}, the outgoing condition at $r=\infty$, i.e. the upper sign choice in \eqref{eq:R-asympt}, corresponds to the recessive normal solution in $S$.

\begin{figure}[!h]

\begin{tikzpicture}[scale=0.4]

\begin{scope}[shading=radial,draw=white]
\draw[clip,rotate=30]
(1,0) -- (6,0)
arc [start angle=0, end angle=120, radius=6] -- (-0.5,0.866) arc [ start angle=120, end angle=0,radius=1] -- cycle;

\shadedraw[shading=radial,draw=white,opacity=10] (0,0) circle (6cm);
\end{scope}

\draw[very thin, blue, ->] (-6,0) -- (6,0) node[anchor=west]{$r$};
\draw[very thin,blue, ->] (0,-1) -- (0,7);

\draw (3,4) node{$S$};

\end{tikzpicture}
\caption{ } \label{fig:sector}
\end{figure}

\subsubsection{Asymptotics at $r_{\pm}$} \label{sec:asympt-rpm}
The characteristic exponents for regular singular points $r_+, r_-$ of the radial Teukolsky equation are
\begin{subequations}\label{eq:charexp-R}
\begin{align}
\{\xi-s/2, -\xi+s/2\} &\quad \text{for $r_+$} \\
\{\eta-s/2, -\eta+s/2\} &\quad \text{for $r_-$}
\end{align}
\end{subequations}
Let $r_0$ be one of the regular singular points, with characteristic exponents $\rho_j$, $j=1,2$, which we order so that $\Re \rho_1 \leq \Re \rho_2$. We first consider the non-resonant case where $\rho_1 - \rho_2$ is not an integer.   
In this case the Frobenius solutions at $r_0$ are of the form
\begin{equation}\label{eq:g-at-singular-point}
R_j(r) = (r-r_0)^{\rho_j} R_{0,j}(r), \quad j=1,2,
\end{equation}
where $R_{0,j}(r)$ are analytic in a neighborhood of $r_0$ of radius $r_+ - r_-$, cf. \citet[\S 1.1.4]{slavyonov:lay}, see also \citet[p. 60]{MR0078494}. We may normalize the solutions so that $R_{0,j}(r_+) = 1$.

We next consider the case of resonance at $r_+$, i.e. $\xiplus=0$ and $s$ an integer. Note that $\xiplus=0$ corresponds to the upper bound of the range of superradiant frequencies, see section \ref{sec:superradiance}. In this case, the characteristic exponents $\rho_j$ take values 
$-s/2, s/2$, and the Frobenius solutions of \eqref{eq:Tg=0} corresponding to $\rho_1$ contains a logarithmic term,
\begin{equation}\label{eq:R-resonant} 
R_1(r) = (r-r_+)^{\rho_1}R_{0,1} + A_2 (r-r_+)^{\rho_2} R_{0,2}(r) \ln(r-r_+) .
\end{equation}
Here, for$ R_{0,1}(R+)=1$, $A_2$ is a non-zero constant
that can be computed. In the resonant case, we choose $R_{0,1}$ so that $R_{0,1} = \sum_{n=0}^\infty c_n (r-r_+)^n$ with $c_{|s|} = 0$, cf. \citet[Theorem 1.3]{slavyonov:lay}.  
The case of resonance at $r_-$ is similar.

\subsubsection{Waves with no incoming radiation} \label{sec:noincoming}
We shall say that waves which are outgoing at infinity and ingoing at the horizon, i.e. $r_+$, satisfy the no incoming radiation condition. A discussion of the boundary conditions for the radial Teukolsky equation can be found in \cite[\S V]{teukolsky:1973}, where also the notion of ingoing and outgoing waves is defined, see also \cite[p. 653, eq. (2.10)]{1973ApJ...185..649P}.

As discussed by 
Penrose 
\citeyearpar{1965RSPSA.284..159P}, an analysis of the asymptotic behavior of massless fields at null infinity leads, upon taking into account the scaling properties of the tetrad components of the field, to specific rates of fall-off depending on the spin weight of the field. This is known as the peeling property; see \cite{2012JGP....62..867M, 1996CQGra..13..461F, 2014CMaPh.331..755A, 2011PhRvD..84b4036H} for discussions of various aspects of peeling.
The peeling property can be summarized by saying that for a scalar component $\varphi_s$ of spin weight $s$, defined with respect to the Kinnersley tetrad, we have $\varphi_s = O(r^{-3-s})$. Taking into account the rescalings given in \eqref{eq:TMESolu-def} we find for that the peeling behavior of the solution of the TME is
\begin{equation}\label{eq:TMESolu-peel}
\TMESolu_s = O(r^{-1-s}), \quad \text{as $r \to \infty$}.
\end{equation}

In order to analyze the behavior of spinning fields at the horizon, a tetrad which is well behaved at the horizon must be used. Following  Teukolsky and Press 
\citeyearpar[\S IV]{1974ApJ...193..443T}, see also 
Hawking and Hartle 
\citeyearpar{1972CMaPh..27..283H} and 
Znajek 
\citeyearpar{1977MNRAS.179..457Z}, one finds that the fields
$\Delta^{s/2} \TMESolu_s$ are regular on the horizon.

Outgoing waves at $r=\infty$ must have an asymptotic form compatible with peeling, as discussed above, and should have positive radial group velocity, while ingoing waves at the horizon as seen by a physically well-behaved observed must be \emph{non-special} (i.e. neither vanishing nor singular on the horizon) and should have negative radial group velocity. 

Based on the above discussion of Frobenius and Thom\'e solutions, we are led to the following definition. 
\begin{definition}[No incoming radiation condition] \label{def:noincoming} Let $R$ be a solution of the radial Teukolsky equation. 
Then we shall say that $R$ has no incoming radiation provided 
\begin{align}
R(r) \sim{}& e^{i\omega r} r^{2iM\omega} r^{-s-1} \quad \text{ at $r=\infty$,}  \label{eq:limr=infty}\\
R(r) \sim{}& (r-r_+)^{\xi-s/2} \quad \text{ at $r=r_+$}. \label{eq:limr=r+}
\end{align}
In particular, we shall require that $R(r)$ is equal to the Frobenius solution with exponent $\xi-s/2$ at $r=r_+$ and equal near infinity $r=\infty$ to the normal solution which is recessive in the upper half plane.
\end{definition}

Specializing the discussion in this section  to the case of waves with no incoming radiation, we can state the following lemma, which summarizes the properties that we shall make use of. Note that we here view $R$ as a solution of \eqref{eq:radialeq} in the complex $r$-plane. The results stated in the lemma are direct consequences of the discussion in this section and the references given there.
\begin{lemma} \label{lem:R-asympt}
Let $R$ be a solution to the radial Teukolsky equation with no incoming radiation. Then 
the following hold:
\begin{enumerate}
\item 
\label{point:noresonance} If $\xi\ne 0$ or if $s$ is not a positive integer, then near $r_+$,
\begin{alignat}{1} \label{eq:R-analyt}
R ={}& (r-r_+)^{\xiplus-s/2} R_{+,1}(r), %
\end{alignat}
where 
$R_{+,1}$ has a power series expansion in $r-r_+$ which converges in the disk $|r-r_+| < r_+ - r_-$.

\item \label{point:resonance} If $\xi=0$ and $s$ is a positive integer, then near $r_+$,
\begin{equation}\label{eq:R-log}
R = (r-r_+)^{-s/2} R_{+,1}(r) + A (r-r_+)^{s/2} \ln(r-r_+) R_{+,2}(r) \end{equation}
where $R_{+,1}, R_{+,2}$ have power series expansions in $r-r_+$ which converge in the disk $|r-r_+| < r_+ - r_-$. Here $A$ is a  constant which can be calculated from $R_{+,1}(r_+)$.
\item
Let $S$ be a sectorial region in the upper half $r$-plane, of the form $|r| > r_0$, $\alpha < \Arg r < \beta$ with $0 < \alpha < \beta < \pi$. Then $R$ has an asymptotic expansion
\begin{equation}\label{eq:R-factor}
R \sim \  e^{i\omega r} r^{2iM\omega} r^{-s-1} \sum_{n=0}^\infty c_n r^{-n},  \quad \text{as $r \to \infty$},
\end{equation}
which is valid uniformly in $S$. In particular, the estimate
\begin{equation}\label{eq:R-ordo}
e^{-i\omega r} R = O(r^{-s-1})
\end{equation}
is valid in $S$.
\end{enumerate}
\end{lemma}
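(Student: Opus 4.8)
The plan is to read off all three assertions from the local theory of the radial Teukolsky equation at its singular points developed in Section~\ref{sec:asymptotics}: Frobenius theory at the regular singular point $\rp$ for Points~\ref{point:noresonance} and~\ref{point:resonance}, and the theory of normal (Thom\'e) solutions at the rank-one irregular singular point $r=\infty$ for the third assertion. No global connection problem must be solved, since each statement is local at a single end; the work lies in identifying which local solution the no incoming radiation condition of Definition~\ref{def:noincoming} selects, and in controlling its analytic behaviour there.

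First I would treat $\rp$. It is a regular singular point with characteristic exponents $\{\xiplus - s/2,\, -\xiplus + s/2\}$ from \eqref{eq:charexp-R}, and Definition~\ref{def:noincoming} selects the Frobenius solution with exponent $\xiplus - s/2$. The decisive simplification is that we work at real $\omega$: by \eqref{eq:xidef}, $\xiplus = -i k_+/(2\kappa_+)$ with $k_+$ and $\kappa_+$ real, so $\xiplus$ is purely imaginary, and the exponent difference $2\xiplus - s$ is therefore an integer only when $\xiplus = 0$ and $s$ is an integer. Hence, outside this single exceptional case, $\rp$ is non-resonant, the two Frobenius solutions are pure powers times analytic factors, and those analytic factors converge in the largest disc about $\rp$ free of other singularities of the equation --- the nearest being $r_-$, at distance $\rp - r_-$. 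Selecting the exponent $\xiplus - s/2$ yields \eqref{eq:R-analyt}, which is Point~\ref{point:noresonance}.

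For Point~\ref{point:resonance} I would treat the remaining case $\xiplus = 0$ with $s$ a positive integer. Here the exponents are $\mp s/2$ with difference the positive integer $s$, and the no incoming radiation condition selects the smaller exponent $-s/2$; this is exactly the resonant configuration of \eqref{eq:R-resonant}, in which the corresponding Frobenius solution acquires a logarithmic term. The general construction then produces \eqref{eq:R-log}, with the same radius of convergence $\rp - r_-$ and with $A$ fixed by the Frobenius recurrence, hence computable from $R_{+,1}(\rp)$. Finally, for the third assertion I would pass to the normal form \eqref{eq:y-normal}; the two formal normal solutions take the forms \eqref{eq:R-asympt}, and Definition~\ref{def:noincoming} selects the outgoing ($+$) one, which in a sector lying in the open upper half plane is the recessive solution. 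The theory of normal solutions then asserts that the formal series multiplying the prefactor is a genuine asymptotic expansion of the actual recessive solution; truncating at $n=0$ gives \eqref{eq:R-ordo}.

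The step needing the most care --- the main obstacle --- is the uniformity in the third assertion. Asymptotic representation by a normal solution degrades near a Stokes ray, where the dominant/recessive labelling switches and the recessive expansion can fail. The statement sidesteps this by requiring $0 < \alpha < \beta < \pi$, so that $S$ and its closure stay a positive angular distance from the Stokes line $\Im r = 0$; Wasow's theory then delivers the expansion \eqref{eq:R-factor} uniformly on $S$ rather than merely ray-by-ray. Pinning down that this angular separation from the Stokes line is precisely what upgrades sectorial asymptotics to the claimed uniform estimate is the delicate point, whereas the Frobenius arguments underlying Points~\ref{point:noresonance} and~\ref{point:resonance} are routine by comparison.
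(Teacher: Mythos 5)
Your overall strategy coincides with the paper's: the paper gives no separate proof of Lemma~\ref{lem:R-asympt}, stating instead that it is a direct consequence of the discussion of Frobenius solutions at $r_+$ (section~\ref{sec:asympt-rpm}) and of normal solutions, Stokes sectors, and dominant/recessive behaviour at $r=\infty$ (section~\ref{sec:asympt-infty}), which is exactly what you invoke. Your treatment of point~\ref{point:resonance} and of the uniform sectorial asymptotics \eqref{eq:R-factor}--\eqref{eq:R-ordo} matches that discussion.

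There is, however, a genuine gap in your proof of point~\ref{point:noresonance}: your case split does not coincide with the lemma's. You establish \eqref{eq:R-analyt} only in the non-resonant cases, i.e.\ when $2\xiplus - s$ is not an integer, equivalently when $\xiplus \ne 0$ or $s$ is not an integer, and you then treat as ``the remaining case'' only $\xiplus = 0$ with $s$ a \emph{positive} integer. This leaves unproved the cases $\xiplus = 0$ with $s$ a non-positive integer ($s = 0, -1, -2$), which the hypothesis of point~\ref{point:noresonance} explicitly includes, and which are physically relevant: $\xiplus = 0$ is the resonant frequency $\omega = m\omega_+$ at the boundary of the superradiant range, for the spin-weight $s \le 0$ components. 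These cases are resonant in the Frobenius sense (the exponent difference $-s$ is a non-negative integer), so your non-resonance argument does not apply to them, yet the conclusion \eqref{eq:R-analyt} still holds, for a reason you must supply. For $s < 0$ the exponent selected by Definition~\ref{def:noincoming}, namely $\xiplus - s/2 = |s|/2$, is the \emph{larger} of the two characteristic exponents \eqref{eq:charexp-R}, and the Frobenius solution attached to the larger exponent never carries a logarithmic term, so it is again a pure power times a function analytic in $|r-r_+| < r_+ - r_-$. For $s = 0$ the two exponents coincide; one local solution is analytic and the other contains a logarithm, and the logarithmic one is incompatible with the normalization $R \sim (r-r_+)^0$ required by \eqref{eq:limr=r+}, hence is excluded by the no incoming radiation condition itself. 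This is precisely the point recorded in Remark~\ref{rem:R-properties}, item~\ref{point:Frob=0}; without it, your argument does not establish point~\ref{point:noresonance} in the stated generality.
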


\begin{remark} \label{rem:R-properties}
\begin{enumerate}
\item \label{point:Frob=0} 
It follows from the properties of the Frobenius solutions, cf. \citet[Theorem 1.3]{slavyonov:lay}, that in the non-resonant case, if
$R_{+,1}(r_+) =0$ then $R \equiv 0$. To see this, the coefficients in the expansion of $R_{+,1}$ are determined by $R_{+,1}(r_+)$, and hence if $R_{+,1}(r_+)=0$, then $R$ vanishes in a neighborhood of $r_+$ and hence must be identically zero. 
In the resonant case with $s=0$ and $\xi=0$, the logarithmic solution is excluded by condition \eqref{eq:limr=r+}. 
Finally, in the resonant case with $s>0$ and $\xiplus=0$, if $R_{+,1}(r_+) = 0$, we find that also $A=0$ and hence it follows that $R(r) \equiv 0$, see also the discussion in section \ref{sec:asympt-rpm}. 
\item \label{point:sector} 
The estimate in \eqref{eq:R-ordo} can be rephrased as saying that there is a constant $C$ such that in the sectorial region $S$,
$$
|e^{-i\omega r} R| \leq C|r^{-s-1}|
$$
The constant $C$ depends on the parameters of the system, and the limit
$$
\lim_{r\to +\infty} |r^{s+1} R(r)|
$$
where the limit is taken along the positive real line. By \cite[Chapter 7, Theorem 2.2]{MR1429619}, the asymptotic form \eqref{eq:R-factor} of $R$ is valid in a circular sector $|\Arg(-iz)| < 3\pi/2 - \delta$ for $\delta > 0$, and this is the maximal sector of validity.  
\item
For completeness, we record that the asymptotic representation along the real line can be stated in terms of the tortoise coordinate as
\begin{equation}\label{eq:Rout-asympt}
R \sim \left \{ \begin{array}{llc} e^{i\omega r} r^{2iM\omega} r^{-s-1} &{}\sim e^{i\omega \rs} r^{-s-1} & \text{as $r \to \infty$} \\ (r-r_+)^{\xiplus - s/2} &{}\sim \KDelta^{-s/2}e^{-ik_+ r_*} & \text{as $r \to r_+$} \end{array} \right .
\end{equation}
which, as mentioned above, after taking into account the rescaling by $\Delta^{s/2}$, agrees with the asymptotic form stated in \cite[Table 1]{1974ApJ...193..443T}.
\end{enumerate}
\end{remark}

\subsection{Superradiance} \label{sec:superradiance}
In this subsection we shall review the classical scattering analysis for spinning fields following Teukolsky and Press \citeyearpar{1974ApJ...193..443T}, see also \cite{MR1647491}. The results that we present here are not new, however, to the best of our knowledge the fact that superradiance does not happen for the spin-$\tfrac{3}{2}$ case has not been discussed before, see Remark \ref{rem:spin-3/2} below. We make the dependence of the spin weight explicit by a subindex $s$. Let $R_s$ be a solution of the radial Teukolsky equation \eqref{eq:radialeq} with spin weight $s$. The rescaling $\upsilon_s = (r^2+a^2)^{1/2} R_s$ transforms the radial Teukolsky equation to an equation with independent variable $r_* \in (-\infty,\infty)$, of the form
\begin{equation}\label{eq:Schrod}
\frac{d^2}{d\rs^2} \upsilon_s + V_s \upsilon_s = 0
\end{equation}
with
$$
V_s =  \frac{\KDelta}{(r^2+a^2)^2} V_{0,s} - V_{\text{resc}}
$$
where
\begin{equation}\label{eq:Vresc}
V_{\text{resc}} = \frac{1}{(r^2+a^2)} \frac{d^2}{d\rs^2} (r^2+a^2)^{1/2} = \frac{\Delta}{(r^2+a^2)^4} \left( a^2\Delta + 2Mr(r^2-a^2) \right ) .
\end{equation}
We shall refer to \eqref{eq:Schrod} as the Schr\"odinger form of the radial Teukolsky equation.
We have
\begin{subequations}\label{eq:Vasympt}
\begin{align}
\lim_{r\to r_+} V_s ={}&  (i \kappa_+ s +  k_+)^2, \\
\lim_{r \to \infty} V_s ={}& \omega^2 .
\end{align}
\end{subequations}
In particular, for $s\ne 0$, the potential $V_s$ is complex and
\begin{equation}\label{eq:Vconj}
\overline{V}_s = V_{-s},
\end{equation}
where $\overline{V}_s$ denotes the complex conjugate. Let $\upsilon_s, \upsilon_{-s}$ be (a priori independent) solutions of \eqref{eq:Schrod} with spin weight $s, -s$ respectively, and define the Wronskian
$$
W[\upsilon_s, \overline{\upsilon}_{-s}] = \upsilon_s' \overline{\upsilon}_{-s} -  \overline{\upsilon}_{-s}'\upsilon_s
$$
where we have used a ${\,}^\prime$ to denote $d/dr_*$. Due to \eqref{eq:Vconj}, both $\upsilon_s$ and $\overline{\upsilon}_{-s}$ solve the same equation, and hence the Wronskian is conserved,
\begin{equation}\label{eq:Wcons}
W[\upsilon_s, \overline{\upsilon}_{-s}]' = 0 .
\end{equation}
We now make a scattering ansatz for $ R_s(r)$ which is purely ingoing at the horizon and a superposition of an ingoing and an outgoing part at infinity,
\begin{equation}
        R_s(r) \sim \begin{cases} Y_{\text{hole},s}\Delta ^{-s/2} e^{-ik_+\rs} ,  & \text{ at } r=r_+\\
         Y_{\text{in}, s} e^{-i\omega \rs} r^{s-1}+ Y_{\text{out}, s} e^{i\omega \rs}r^{-s -1},  & \text{ at } r=\infty .
    \end{cases}
\end{equation}
Here we have intentionally left the normalization of the ingoing mode free. The fact that the Wronskian is conserved gives the identity
\begin{equation}\label{eq:evaluatedwronskian}
       - 4M r_+ (i k_+ +\sfrak\kappa_+) Y_{\text{hole},\sfrak}\overline{Y_{\text{hole},-\sfrak}}= -2i \omega Y_{\text{in},\sfrak}\overline{Y_{\text{in},-\sfrak}}+2i \omega Y_{\text{out},\sfrak}\overline{Y_{\text{out},-\sfrak}} .
\end{equation}

Following Teukolsky and Press \citeyearpar{1974ApJ...193..443T}, see also Starobinsky and Churilov \citeyearpar{1974JETP...38....1S}, we now use the Teukolsky-Starobinsky Identities (TSI) to establish a relation between the fields with spin weights $s, -s$. Define the operators $\mathcal{D}, \mathcal{D}^\dagger$ by $\mathcal{D}= (d/dr-iK/\Delta)$ and $\mathcal{D}^\dagger= \mathcal{D}(-\omega,-m)$. For real $\omega$ the dagger operation is identical with a complex conjugation. The TSI for the solutions of the radial Teukolsky equation are \cite{MR1647491}, see also \cite{1989JMP....30.2925K,2009PhRvD..80l4001F},
\begin{subequations}\label{eq:TSrelation}
\begin{align}
\Delta^{\sfrak/2} \mathcal{D}^{2\sfrak}\Delta^{\sfrak/2} R_{-\sfrak}={}& C_\sfrak R_\sfrak \label{eq:TSIs} \\
\Delta^{\sfrak/2} {\mathcal{D}^\dagger}^{2\sfrak}\Delta^{\sfrak/2} R_\sfrak ={}& \overline{C_\sfrak}R_{-\sfrak} \label{eq:TSI-s}
\end{align}
\end{subequations}
where $C_\sfrak$ are constants depending on the parameters $\sfrak,a,M,m,\omega$, and the separation constant $\Lambda$.
\begin{remark}
The TSI can be understood as saying that the operator $\Delta^{\sfrak/2} \mathcal{D}^{2\sfrak} \Delta^{\sfrak/2}$ applied to $R_{-\sfrak}$ is proportional to a solution $R_\sfrak$ of the radial Teukolsky equation with spin weight $\sfrak$ and vice versa. In applying the TSI we thus restrict to solutions $R_s, R_{-s}$ satisfying this condition. It can be shown, cf. \cite{2016arXiv160106084A} and references therein, that if the spin-weighted fields $R_s, R_{-s}$ are the radial functions corresponding to the components with extreme spin weight of a field satisfying the spin-$\sfrak$ test field equations on the Kerr background, then the TSI hold.
\end{remark}
Applying \eqref{eq:TSrelation} to the asymptotic solutions of $R_{\pm \sfrak}$ at the horizon and at infinity, and comparing leading order terms gives relations between $Y_{\text{hole},\pm \sfrak}$, $Y_{\text{in},\pm \sfrak}$, and $Y_{\text{out},\pm \sfrak}$. Some calculations give the identity
\begin{equation}\label{eq:scattering}
A_s |Y_{\text{hole},s}|^2 = |Y_{\text{in},s}|^2 - B_s |Y_{\text{out},s}|^2
\end{equation}
where
\begin{equation}\label{eq:Bs}
\begin{array}{ll}
B_\sfrak = \frac{(2\omega)^{4\sfrak}}{|C_\sfrak|^2},& B_{-\sfrak}=B_\sfrak^{-1}
\end{array}
\end{equation}
and $A_\sfrak$ are given by\footnote{For the current considerations this has been checked up to $\sfrak=2$ but it can be expected that the relation holds for all half-integer spins $\sfrak \geq0$.}
\begin{equation}\label{eq:Avals}
\begin{aligned}
A_{-\sfrak}&=\begin{cases}
\left(\frac{2Mr_+}{\omega}\right)^{2\sfrak+1} \frac{\prod_{n=0}^{ \sfrak }(k_+^2 + (\sfrak-n)^2\kappa_+^2)}{k_+} & \text{ if } \sfrak =0,1,2... \\
\left(\frac{2Mr_+}{\omega}\right)^{2\sfrak+1} \prod_{n=0}^{\lfloor \sfrak \rfloor}(k_+^2 + (\sfrak-n)^2\kappa_+^2)& \text{ if } \sfrak = \frac{1}{2},\frac{3}{2}...
\end{cases}
\\
A_{\sfrak}&= \left(\frac{2Mr_+}{\omega}\right)^2\frac{(k_+^2+\sfrak^2\kappa_+^2)} {A_{-\sfrak}}
\end{aligned}
\end{equation}
where $\lfloor \sfrak \rfloor$ is the integer part of $s$, i.e. the largest integer less than or equal to $\sfrak$. 
The reflection and transmission coefficients $\Rcal_\sfrak, \Tcal_\sfrak$ are defined as
$$
\Rcal_\sfrak = B_\sfrak \frac{|Y_{\text{out},\sfrak}|^2}{|Y_{\text{in},\sfrak}|^2}= B_{-\sfrak} \frac{|Y_{\text{out},-\sfrak}|^2}{|Y_{\text{in},-\sfrak}|^2}, \quad \Tcal_\sfrak =  A_\sfrak \frac{|Y_{\text{hole},\sfrak}|^2}{|Y_{\text{in},\sfrak}|^2}=A_{-\sfrak} \frac{|Y_{\text{hole},-\sfrak}|^2}{|Y_{\text{in},-\sfrak}|^2}
$$
Then $\Rcal_\sfrak$ represents the fraction of the ingoing wave energy 
which is reflected out to infinity, while $\Tcal_\sfrak$ represents the fraction which is transmitted into the black hole. By construction we have the conservation law $\Rcal_\sfrak + \Tcal_\sfrak = 1$.

From the values of the coeffients $A_s$ given in \eqref{eq:Avals} we see that for integer spins, $A_s$ changes sign with $\omega k_+$, while for half-integer spins, $A_s$ is positive. This means that when $0 < \omega/m < \omega_+$,  the reflection coefficient for a field with integer spin will be greater than unity. This phenomenon is known as superradiance.

If $A_s > 0$, i.e. for the non-superradiant frequencies or for the half-integer spins, then $\Rcal_s < 1$. In particular, this implies that if $Y_{\text{in},s} =0$, then also $Y_{\text{out},s} = 0$
so that a solution with no incoming radiation must be zero. Thus, in these cases mode stability holds for real frequencies, and the classical scattering analysis given here is sufficient to prove Theorem \ref{thm:main-intro}.

\begin{remark}\label{rem:spin-3/2} The fact that the coefficient $A_{\pm \half}$ is positive also for superradiant frequencies is known and follows from the fact that the spin-$\half$ field admits a future directed conserved current; see \cite{1998CRASM.327..743M} concerning the spin-$3/2$ case. 
\end{remark}

It remains to determine the square modulus of the constants $C_\sfrak$ in \eqref{eq:TSrelation}.  For this purpose we apply $\Delta^{\sfrak/2} {\mathcal{D}^\dagger}^{2\sfrak}\Delta^{\sfrak/2}$ to \eqref{eq:TSIs} and use \eqref{eq:TSI-s} to get
\begin{equation}\label{eq:staroconst}
      \Delta^{\sfrak/2}{\mathcal{D}^\dagger}^{2\sfrak}\Delta^{\sfrak} \mathcal{D}^{2\sfrak}\Delta^{\sfrak/2} R_{-\sfrak}= |C_\sfrak|^2R_{-\sfrak}
\end{equation}
The left hand side can now be evaluated using the TME for $R_{-\sfrak}$.

One 
finds\footnote{\label{foot:agree}For $\sfrak=1$ and $\sfrak=2$ this agrees with the expressions obtained in \citep{1974ApJ...193..443T} and \citep{MR1647491} after taking into account the different conventions used there.}
\begin{subequations}\label{eq:Csfrak}
\begin{align}
|C_{\half}|^2=& \Lambda + \frac{1}{2}\\
|C_1|^2=& (\Lambda + 1)^2 + 4 a m \omega -4 a^2 \omega^2\\
|C_{\tfrac{3}{2}}|^2=& (\Lambda + \frac{3}{2})^3+ (\Lambda + \frac{3}{2})^2+ 16(\Lambda + \frac{3}{2})(a^2 \omega^2-am\omega)-16 a^2 \omega^2\\
|C_2|^2=& (\Lambda + 2)^4+ 4 (\Lambda + 2)^3+ 4 (\Lambda + 2)^2(1+10 a m \omega -10 a^2 \omega ^2)\\
\nonumber & + 48(\Lambda + 2)(am\omega+ a^2 \omega^2) + 144 \omega^2(a^2 m^2 + M^2-2 a^3 m \omega + a^4 \omega^2)
\end{align}
\end{subequations}
\section{Integral transformations} \label{sec:inttrans}
It is convenient to transform the radial Teukolsky equation to its canonical form before introducing the integral transform that shall be used.
The %
rescaling
\begin{equation}\label{eq:R-to-g}
R(r) = (r-r_-)^{\ximinus-s/2} (r-r_+)^{\xiplus - s/2} e^{-i\omega r} g(r)
\end{equation}
puts the radial Teukolsky equation in canonical form.
Letting
\begin{subequations}\label{eq:heun-param}
\begin{align}
\deltaminus={}&
2\ximinus -s + 1\\
\deltaplus={}&
2\xiplus -s+1 \\
p={}&
-2i\omega \\
\alpha={}& 1-2s \\
\sigma={}& -\Lambda -2ir_-(1-2s)\omega-s
\end{align}
\end{subequations}
we have that $\RadialOp R = 0$ is equivalent to
\begin{equation}\label{eq:Tg=0}
T_r g(r) = 0
\end{equation}
where
\begin{multline}\label{eq:Toperator-def}
T_r h(r) = (r-r_-)(r-r_+) \frac{d^2 h}{dr^2} \\
+ (\deltaminus(r-r_+)+\deltaplus(r-r_-)+p(r-r_-)(r-r_+))\frac{dh}{dr} +(\alpha p(r-r_-)+\sigma) h
\end{multline}
is a Heun operator in canonical form, with parameters $\deltaminus,\deltaplus,p,\alpha,\sigma$.

Let $\tilde T$ be a new Heun operator with different parameters given by
\begin{subequations}\label{eq:tilde-heun-param}
\begin{align}
\tilde\deltaminus :={}& \alpha =  1-2s  \\
\tilde \deltaplus :={}&\deltaminus +\deltaplus -\alpha = 1-4iM\omega \\
\tilde p :={}& p \\
\tilde \alpha :={}& \deltaminus = 1 - s + 2\ximinus \\
\tilde \sigma :={}& \sigma
\end{align}
\end{subequations}
and let $f(x,r)$ be defined as
\begin{equation}\label{eq:f-def}
f(x,r) = e^{-p\frac{(x-r_-)(r-r_-)}{r_+-r_-}} .
\end{equation}
With the above choice of parameters for $\tilde T$ we have that
\begin{equation}
    (\tilde{T}_x-T_r) f(x,r)=0.
\end{equation}
As we shall see, this means that we can use $f(x,r)$ as the kernel for an integral transformation between solutions of these two Heun equations. Let a contour $\Cont$ in the complex $r$-plane be given and let $g(r)$ be a solution to \eqref{eq:Toperator-def} with parameters as in \eqref{eq:heun-param}.
Defining $\tilde g(x)$, following \cite{whiting:1989}, by
\begin{equation}\label{eq:gdefine}
\tilde{g}(x) = \int_\Cont f(x,r) (r-r_{-})^{\gamma-1}(r-r_{+})^{\delta -1}e^{pr}g(r)dr ,
\end{equation}
we have that
\begin{equation}
\begin{split}
    \tilde{T}_x \tilde{g}(x) &= \int_\Cont \tilde{T}_xf(x,r) (r-r_{-})^{\gamma -1}(r-r_{+})^{\delta -1}e^{pr}g(r)dr\\
    &=\int_\Cont T_r f(x,r) (r-r_{-})^{\gamma -1}(r-r_{+})^{\delta -1}e^{pr}g(r)dr\\
    &=
\left.
    (r-r_-)^\deltaminus (r-r_+)^\deltaplus e^{pr} \left( \frac{df(x,r)}{dr} g(r) - f(x,r) \frac{dg(r)}{dr} \right) 
\right|_{\partial \Cont}    \\
&+\int_\Cont f(x,r)(r-r_-)^{\gamma-1} (r-r_{+})^{\delta-1}e^{pr}T_rg(r)dr .
\end{split}
\end{equation}
 The last step is an integration by parts. Note that the expression in the last line vanishes identically, because $g(r)$ satisfies \eqref{eq:Toperator-def}. Hence, provided the integral in \eqref{eq:gdefine} converges and the boundary condition
 \begin{equation}\label{eq:bc}
\left. 
(r-r_-)^\deltaminus (r-r_+)^\deltaplus e^{pr} \left( \frac{df(x,r)}{dr} g(r) - f(x,r) \frac{dg(r)}{dr} \right) 
\right|_{\partial \Cont}
=0
\end{equation}
is satisfied, we see that $\tilde g$ satisfies the transformed equation
\begin{equation}\label{eq:transformedHeun}
\tilde T_x \tilde g(x) = 0
\end{equation}
Using the parameters \eqref{eq:heun-param} in equation \eqref{eq:gdefine} and using the relation \eqref{eq:R-to-g} we can write $\tilde g$ in the form
\begin{subequations}\label{eq:inttrans}
\begin{align}
\tilde g(x) ={}&  \int_{\Cont} e^{2i\omega\frac{(x-r_-)(r-r_-)}{r_+-r_-}} (r-r_-)^{2\ximinus-s}(r-r_+)^{2\xiplus-s} e^{-2i\omega r} g(r) dr \label{eq:int-g} \\
={}&  \int_{\Cont} e^{2i\omega\frac{(x-r_-)(r-r_-)}{r_+-r_-}}  (r-r_-)^{\ximinus-s/2}(r-r_+)^{\xiplus-s/2}e^{-i\omega r} R(r) dr \label{eq:int-R}
\end{align}
\end{subequations}
\begin{remark} Assuming no incoming radiation for $R$, we have
\begin{equation}\label{eq:int-asymp}
(r-r_-)^{\ximinus-s/2}(r-r_+)^{\xiplus - s/2} e^{-i\omega r} R(r) \sim \left\{
\begin{array}{cc} (r-r_+)^{2\xiplus -s} & \text{ for $r \to r_+$} \\
r^{-2s-1} & \text{ for $r \to \infty$}
\end{array} \right.
\end{equation}
\end{remark}

\subsection{Transforming to self-adjoint and Schr\"odinger form}
We now transform \eqref{eq:transformedHeun} to self-adjoint form, by the s-homotopic transformation
\begin{equation}\label{eq:u-def}
u(x) = (x-r_-)^{-s}(x-r_+)^{-2iM\omega} e^{-i\omega x} \tilde g(x)
\end{equation}
Then $u$ satisfies the equation
\begin{equation}\label{eq:u}
\partial_x \Delta \partial_x u(x) +  \tilde V_0(x) u(x) = 0
\end{equation}
where
\begin{equation}\label{eq:tildeV-def}
\tilde V_0(x)= - \Lambda + \Delta \omega^2 \left ( \frac{x+r_-}{x-r_+} \right )^2 - 4i\omega (x-r_-) \ximinus - \frac{x-r_+}{x-r_-} s^2
\end{equation}
\newcommand{\xs}{x_*}
Let $\xs$ be the tortoise coordinate corresponding to $x$,
$$
d\xs = \frac{x^2+a^2}{\Delta} dx
$$
Then, writing ${}' = d/d\xs$, and defining
\begin{align}
U ={}& (x^2+a^2)^{1/2} u , \label{eq:Udef}
\end{align}
we have the Schr\"odinger form of the transformed equation,
\begin{equation}\label{eq:schrod-transf}
U'' + \tilde V U = 0,
\end{equation}
where now
\begin{equation}\label{eq:Vtransf}
\tilde V = \frac{\Delta}{(x^2+a^2)^2} \tilde V_0 - V_{\text{resc}}
\end{equation}
with $V_{\text{resc}}$ as in \eqref{eq:Vresc}.

\begin{remark} \label{rem:Wronsk}
For $\omega \in \Reals$, the potential $\tilde V$ given by \eqref{eq:Vtransf} is real. Hence if $U$ is a solution to \eqref{eq:schrod-transf} the Wronskian $W[U, \bar U] = U' \bar U - \bar U' U$ is conserved, $W' = 0$, where $\bar U$ denotes the complex conjugate of $U$.
\end{remark}
We have
$$
\tilde V \big{|}_{x=r_+} = \frac{(r_+-r_-)^2}{r_+^2} \omega^2, \quad \lim_{x\to\infty} \tilde V(x) = \omega^2
$$

\begin{remark} \label{rem:yakov}
In the paper \citep{2015AnHP...16..289S} where the problem of mode stability on the real axis is considered for the case $s=0$, the integral transform \eqref{eq:inttrans} is applied with the contour $\Cont$ consisting of the real half-line starting at $r_+$, $\Cont=[r_+,\infty)$, to define $\tilde g(x)$ for $x$ with positive imaginary part. The function $\tilde g(x)$ is then extended to a domain including the real half line $[r_+,\infty]$ in the $x$-plane, and this extension is used to yield a solution to the Schr\"odinger type equation \eqref{eq:schrod-transf} (denoted $\tilde u$ in \citep{2015AnHP...16..289S}).

In this context it is important to emphasize that with $\Cont=[r_+,\infty)$, for real $x$ the  integral transform defining $\tilde g(x)$ does not converge absolutely, the integral form of $\partial_x \tilde g(x)$ obtained by differentiating under the integral sign is divergent, and the boundary condition \eqref{eq:bc} fails to be satisfied. In particular, the representation \eqref{eq:inttrans} of $\tilde g(x)$ fails to be valid for $x \in [r_+,\infty)$ and hence the solution $\tilde g(x)$ to \eqref{eq:transformedHeun} constructed by this extension procedure is \emph{different} from the function defined by the integral \eqref{eq:inttrans}.  
\end{remark}

\section{Limits} \label{sec:limits}
Recall \cite[Theorem C.2]{montgomery:vaughan} that for
$\Re \zeta > 0$,
\begin{equation}\label{eq:euler}
\int_0^{\infty} e^{-x} x^{\zeta - 1} dx = \Gamma(\zeta) , \quad \text{(Euler's integral).}
\end{equation}
where $\Gamma(\zeta)$ is the Gamma function. The Gamma function extends to a meromorphic function on the complex plane with simple poles at the non-positive integers.
Let $\rho > 0$ and let
$\Hankel$ be the contour in the complex $z$-plane which consists of the half line from  $-\infty-i\rho$ to $-i\rho$, the semi-circle of radius $\rho$ connecting $-i\rho$ with $i\rho$ and the half line from $i\rho$ to $i\rho - \infty$, see figure \ref{fig:Hankel}.
\begin{figure}[!h]
\centering

\raisebox{-0.5\height}{\includegraphics{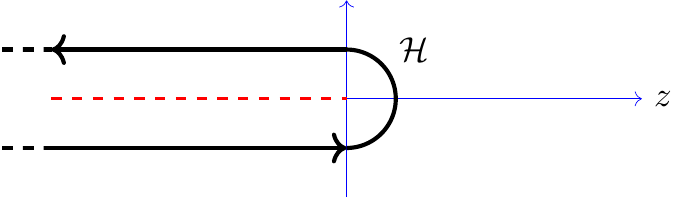}}

\caption{Hankel's contour} \label{fig:Hankel}
\end{figure}
Then for $\zeta \in \Co$ we have \cite[Theorem C.3]{montgomery:vaughan},
\begin{equation}\label{eq:hankel}
\frac{1}{2\pi i}\int_{\Hankel} e^z z^{-\zeta} dz = \frac{1}{\Gamma(\zeta)}, \quad \text{(Hankel's integral).}
\end{equation}
Among the many relations known for the Gamma function, we also recall the product formula \cite[Eq. (C.6)]{montgomery:vaughan}
$$
\Gamma(\zeta)\Gamma(1-\zeta) = \frac{\pi}{\sin(\pi\zeta)}
$$
We now consider the integral transform \eqref{eq:inttrans} for a contour $\Cont$. We restrict to the case
$$
\omega > 0 .
$$
and to contours $\Cont$ such that the integral \eqref{eq:inttrans} converges, and the boundary condition \eqref{eq:bc} is satisfied.

Let
$$
\nu = 2\omega(x-r_-)/(r_+-r_-),
$$
and define
\begin{equation}\label{eq:h-def}
h(r) = (r-r_-)^{\ximinus-s/2}(r-r_+)^{-\xiplus+s/2} e^{-i\omega r} R(r)
\end{equation}
We note that in view of \eqref{eq:int-asymp}, we have
\begin{equation}\label{eq:h-asymp}
h(r) \sim \left\{
\begin{array}{cc} 1 & \text{ for $r \to r_+$} \\
r^{-2\xiplus -s-1} & \text{ for $r \to \infty$}
\end{array} \right.
\end{equation}
We have the following corollary to Lemma \ref{lem:R-asympt}.
\begin{cor}\label{cor:h-est}
Let $h$ be given by \eqref{eq:h-def}.
Then, $h$ is analytic on the complex plane except at the singular points $r_-$, $r_+$ of the radial Teukolsky equation, 
where $h$ may have branch points. Further, it holds that
\begin{enumerate}
\item
In the non-resonant case, $\xi \ne 0$ or $s$ not a positive integer, $h(r)$ is analytic at $r_+$. 
\item
Let $S$ be the circular sector defined in Lemma \ref{lem:R-asympt}. Then
$$
h(r) = O(r^{-s-1})
$$
holds in $S$.
\end{enumerate}
\end{cor}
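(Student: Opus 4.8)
The plan is to obtain all three assertions directly from Lemma \ref{lem:R-asympt} together with the structure of the radial Teukolsky equation, handling the claims one at a time. For the global analyticity statement I would first note that $R$ solves the second-order linear equation \eqref{eq:radialeq}, whose only finite singular points are the regular singular points $r_-$ and $r_+$; hence $R$ continues analytically, as an a priori multivalued function, throughout $\Co \setminus \{r_-, r_+\}$. The prefactor $(r-r_-)^{\ximinus - s/2}(r-r_+)^{-\xiplus + s/2}$ is holomorphic off $\{r_-, r_+\}$ and introduces branch points at most at those two points, while $e^{-i\omega r}$ is entire. Multiplying these together, $h$ is analytic on $\Co \setminus \{r_-, r_+\}$ with at worst branch points at $r_\pm$, which gives the first claim.

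For the analyticity at $r_+$ in the non-resonant case I would substitute the Frobenius form \eqref{eq:R-analyt}, $R = (r-r_+)^{\xiplus - s/2} R_{+,1}(r)$, into the definition \eqref{eq:h-def}. The two powers of $(r-r_+)$ then cancel identically, leaving
\begin{equation*}
h(r) = (r-r_-)^{\ximinus - s/2}\, e^{-i\omega r}\, R_{+,1}(r).
\end{equation*}
Since the disk $|r-r_+| < r_+ - r_-$ of convergence of $R_{+,1}$ excludes $r_-$, the factor $(r-r_-)^{\ximinus - s/2}$ is holomorphic on it, and $e^{-i\omega r}$ is entire; hence $h$ is holomorphic at $r_+$, which is the second claim.

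For the sectorial estimate I would combine the uniform bound \eqref{eq:R-ordo}, $e^{-i\omega r} R = O(r^{-s-1})$ in $S$, with a bound on the modulus of the prefactor as $r \to \infty$ in $S$. The decisive point is that for $\omega \in \Reals$ the exponents $\ximinus, \xiplus$ are purely imaginary, by \eqref{eq:xietadef} and the reality of $k_\pm$ and $\kappa_\pm$. Therefore $\Re(\ximinus - s/2) = -s/2$ and $\Re(-\xiplus + s/2) = s/2$, so $|(r-r_-)^{\ximinus-s/2}| \sim |r|^{-s/2}$ and $|(r-r_+)^{-\xiplus+s/2}| \sim |r|^{s/2}$, and these powers cancel; the imaginary parts of the exponents contribute only factors of the form $e^{-(\Im\ximinus)\Arg(r-r_-)}$ and $e^{(\Im\xiplus)\Arg(r-r_+)}$, which remain bounded because $\Arg(r-r_\pm)$ stays in a bounded subinterval of $(0,\pi)$ throughout $S$. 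Hence the prefactor is $O(1)$ in modulus and $h = O(r^{-s-1})$ follows in $S$, the third claim.

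This last sectorial bound is the only step needing genuine care, and is where I expect the main (mild) obstacle to lie: one must verify both that the $\pm s/2$ powers cancel and that the purely imaginary exponents produce merely bounded oscillatory factors, the latter resting on $S$ lying in a fixed closed subsector of the open upper half plane so that $\Arg(r - r_\pm)$ is controlled. The outcome is consistent with the leading-order behavior $h \sim r^{-2\xiplus - s - 1}$ recorded in \eqref{eq:h-asymp}, since $|r^{-2\xiplus}| = 1$ when $\xiplus$ is imaginary.
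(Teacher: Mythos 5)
Your proof is correct and takes essentially the same route as the paper, which states the corollary as a direct consequence of Lemma \ref{lem:R-asympt} without further argument: the definition \eqref{eq:h-def} combined with the Frobenius form \eqref{eq:R-analyt} (giving the exact cancellation of the $(r-r_+)$ powers) and the sectorial bound \eqref{eq:R-ordo} (together with the fact that $\xiplus$, $\ximinus$ are purely imaginary for real $\omega$, so the prefactor is $O(1)$ in $S$) is precisely the intended derivation. Your write-up merely makes explicit the branch choices and the control of $\Arg(r-r_\pm)$ that the paper leaves implicit.
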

\begin{remark}
By 
Lemma \ref{lem:R-asympt}, $h$ is analytic for $|r - r_+| < |r_+ - r_-|$ if $s \leq 0$. 
\end{remark}

A calculation shows 
\begin{equation}\label{eq:hlim}
\lim_{r\to r_+} |h(r)|^2 = |r_+-r_-|^{-2s} \lim_{r\to r_+} \Delta^s |R(r)|^2
\end{equation}
For a given $\Cont$, we define, after choosing a suitable branch of $h$ if necessary,
\begin{equation}\label{eq:I-def}
I(\nu,\alpha) = \int_{\Cont} e^{i\nu(r-r_+)}(r-r_+)^{\alpha} h(r) dr \\
\end{equation}
Then
$$
\tilde g(x) = e^{i \nu(x) (r_+-r_-)} I(\nu(x),2\xiplus-s)
$$
In the rest of this section, we shall calculate the limit $\lim_{\nu \to \infty} \nu^{\alpha+1} I(\nu,\alpha)$. This argument is closely related to the proof of Watson's Lemma, cf. \cite[\S 1.9]{MR1034956}, which can be used to derive an expansion at $\nu=\infty$ of this expression. 
\subsection{The case $s > 0$}
Let $\rho_0> 0$ be small. We choose $\Cont$ to be the rotated Hankel type contour in the complex $r$-plane which consists of the half line from 
$i\infty + r_+ - \rho_0$ to $r_+ - \rho_0$, the semicircle of radius $\rho_0$ connecting $r_+ - \rho_0$  with $r_+ + \rho_0$, and the half line from $r_+ + \rho_0$ to $i\infty + r_+ + \rho_0$, 
see figure \ref{fig:r_+-contour}. 
\begin{figure}[!hb]
\centering

\raisebox{-0.5\height}{\includegraphics{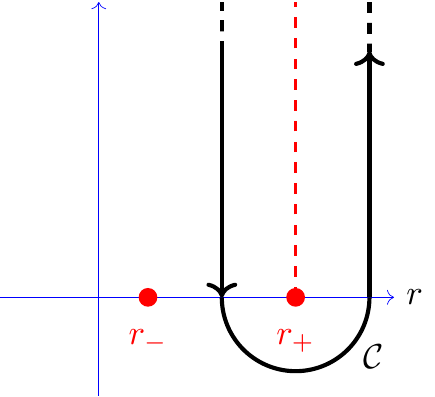}}

\caption{}%
\label{fig:r_+-contour}
\end{figure}
Using this contour in the definition of $\tilde g(x)$, we find that due to the exponential decay of the kernel $e^{i\nu (r-r_+)}$ for $\Im r \nearrow \infty$, the boundary condition \eqref{eq:bc} is satisfied with this choice of $\Cont$ and hence $\tilde g(x)$ is a solution to the transformed equation \eqref{eq:transformedHeun}.

We calculate
\begin{align}
I(\nu,\alpha) ={}& \int_{\Cont} e^{i\nu(r-r_+)}(r-r_+)^{\alpha} h(r) dr \nonumber \\
\intertext{substitute $\nu(r -r_+)= -it$}
={}& (-i)^{\alpha+1} \nu^{-\alpha-1} \int_{\Cont_{t}} e^{t} t^\alpha h(r_+ - \nu^{-1} i t) dt \label{eq:Inualpha-eval}
\end{align}
where $\Cont_t$ coincides with the Hankel contour with $\rho = \nu \rho_0$.

A limiting argument together with Hankel's integral formula \eqref{eq:hankel} now yields the following result.

\begin{lemma}\label{lem:lim-Hankel} Let $\Cont$ be the contour as in figure \ref{fig:r_+-contour}, and let $h$ satisfy the conclusions of Corollary \ref{cor:h-est}.
Then it holds that
$$
\lim_{\nu \to \infty} \nu^{\alpha+1} I(\nu,\alpha) = (-i)^{\alpha+1} \frac{2\pi i}{\Gamma(-\alpha)} h(r_+)
$$
\end{lemma}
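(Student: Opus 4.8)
The plan is to start from the representation already obtained in \eqref{eq:Inualpha-eval},
\[
\nu^{\alpha+1} I(\nu,\alpha) = (-i)^{\alpha+1} \int_{\Cont_t} e^{t} t^{\alpha}\, h(r_+ - \nu^{-1} i t)\, dt ,
\]
in which $\Cont_t$ is the Hankel contour of radius $\rho = \nu\rho_0$, and to reduce the claim to Hankel's formula \eqref{eq:hankel}: taking $\zeta = -\alpha$ there gives $\int_{\Hankel} e^{t} t^{\alpha}\, dt = 2\pi i/\Gamma(-\alpha)$ for \emph{every} radius $\rho>0$. The heuristic is that $h(r_+ - \nu^{-1} i t) \to h(r_+)$ as $\nu \to \infty$, so the integral should converge to $h(r_+)\int_{\Hankel} e^{t} t^{\alpha}\, dt$. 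The difficulty is that this limit cannot be taken directly inside the integral over $\Cont_t$, since the radius $\nu\rho_0$ of $\Cont_t$ grows without bound. The first and central step is therefore to replace $\Cont_t$ by a Hankel contour of \emph{fixed} radius $\rho_1$.

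To carry this out I would invoke Corollary \ref{cor:h-est}: $h$ is analytic away from $r_-, r_+$, so as a function of $t$ the integrand $e^{t} t^{\alpha}\, h(r_+ - \nu^{-1} i t)$ is analytic except at $t=0$ (corresponding to $r=r_+$), along the branch cut of $t^\alpha$ on the negative real axis, and at $t = -i\nu(r_+ - r_-)$ (corresponding to $r = r_-$). Choosing $\rho_0 < r_+ - r_-$ keeps this last point, which lies on the negative imaginary axis at distance $\nu(r_+-r_-) > \nu\rho_0$ from the origin, outside the region enclosed between $\Cont_t = \Hankel_{\nu\rho_0}$ and the fixed contour $\Hankel_{\rho_1}$. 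By Cauchy's theorem the two contour integrals agree up to the two connecting vertical segments at $\Re t = -T$; on these $|e^t| = e^{-T}$, while $|t^\alpha|$ grows only polynomially and $h(r_+ - \nu^{-1} i t)$ stays bounded (indeed $h = O(r^{-s-1})$ in the sector $S$ of Lemma \ref{lem:R-asympt}, and the relevant $r$ have $\Im r = \nu^{-1} T \to \infty$), so letting $T\to\infty$ at fixed $\nu$ these segments vanish. Hence $\int_{\Cont_t} e^{t} t^{\alpha}\, h(r_+ - \nu^{-1} i t)\, dt = \int_{\Hankel_{\rho_1}} e^{t} t^{\alpha}\, h(r_+ - \nu^{-1} i t)\, dt$ for all large $\nu$.

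On the fixed contour $\Hankel_{\rho_1}$ I would now pass to the limit by dominated convergence. Pointwise, $r_+ - \nu^{-1} i t \to r_+$, so $h(r_+ - \nu^{-1} i t) \to h(r_+)$; in the resonant case of Lemma \ref{lem:R-asympt}\eqref{point:resonance} the logarithmic term is killed by the factor $(r-r_+)^{s}$ with $s>0$, so this limit still holds and $h(r_+)$ is well defined. For a dominating function, note that $\sup_{\nu \geq \nu_0,\, t \in \Hankel_{\rho_1}} |h(r_+ - \nu^{-1} i t)| =: C_h < \infty$, because the swept $r$-values lie in a fixed neighbourhood of $r_+$ unioned with a region on which $h$ decays, while $\int_{\Hankel_{\rho_1}} |e^{t} t^{\alpha}|\, |dt| < \infty$ (on the half-lines $e^{\Re t}$ decays exponentially and dominates $|t|^{-s}$ since $s>0$; the semicircle is a finite arc). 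Thus $C_h\,|e^{t} t^{\alpha}|$ dominates, and
\[
\lim_{\nu\to\infty} \int_{\Hankel_{\rho_1}} e^{t} t^{\alpha}\, h(r_+ - \nu^{-1} i t)\, dt = h(r_+) \int_{\Hankel_{\rho_1}} e^{t} t^{\alpha}\, dt = h(r_+)\,\frac{2\pi i}{\Gamma(-\alpha)} .
\]
Multiplying by $(-i)^{\alpha+1}$ gives the assertion.

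The step I expect to be the main obstacle is the contour deformation of the second paragraph: one must argue that replacing the $\nu$-dependent, unboundedly growing Hankel contour by a fixed one is legitimate, which requires both the analyticity input from Corollary \ref{cor:h-est} and the choice $\rho_0 < r_+ - r_-$ to exclude the singularity at $r_-$, together with a careful estimate showing the connecting segments vanish. Once the contour is fixed, the remaining convergence is routine, and the identification of the constant is immediate from Hankel's formula \eqref{eq:hankel}; this is the precise sense in which the argument parallels the proof of Watson's Lemma.
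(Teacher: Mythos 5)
Your proposal is correct and takes essentially the same approach as the paper: the paper's proof exploits the fact that $I(\nu,\alpha)$ is independent of $\rho_0$ to set $\rho_0 = \nu^{-1}\rho$, so that $\Cont_t$ becomes the \emph{fixed} Hankel contour $\Hankel$, and then passes to the limit under the integral and applies \eqref{eq:hankel} --- precisely your ``normalize the contour, then dominated convergence'' scheme, with the deformation carried out in the $r$-plane before substituting rather than in the $t$-plane afterwards. Your write-up simply makes explicit the Cauchy deformation (including the choice $\rho_0 < r_+ - r_-$ and the vanishing connecting segments) and the dominating function that the paper compresses into the independence claim and the phrase ``a limiting argument''.
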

\begin{proof}
If $r_+$ is a branch point for $h$, we choose a branch of $h$ by cutting the complex plan along the half line in the imaginary direction starting at $r_+$, see figure \ref{fig:r_+-contour}. In view of its definition, the integral $I(\nu,\alpha)$ is independent of $\rho_0$. Hence we can set $\rho_0 = \nu^{-1}\rho$ so that $\Cont_t$ coincides with the Hankel contour.
Starting from \eqref{eq:Inualpha-eval} we have,
\begin{align*}
\lim_{\nu \to \infty} \nu^{\alpha+1} I(\nu,\alpha) ={}& \lim_{\nu \to \infty}
(-i)^{\alpha+1}  \int_{\Cont_{t}} e^{t} t^\alpha h(r_+ - \nu^{-1} i t) dt \\
={}&
(-i)^{\alpha+1}  h(r_+) \int_{\Hankel} e^{t} t^\alpha   dt \\
={}&
(-i)^{\alpha+1} \frac{2\pi i}{\Gamma(-\alpha)} h(r_+)
\end{align*}
where in the last step we used \eqref{eq:hankel}.
\end{proof}
\begin{cor}\label{cor:Ulimit-Hankel} Assume that $R$ is a solution of the radial Teukolsky equation with $\omega>0$, and with no incoming radiation. Let $U$ be defined via \eqref{eq:Udef} and the integral transform \eqref{eq:inttrans} with the contour $\Cont$ as in figure \ref{fig:r_+-contour}. Then $U$ solves \eqref{eq:schrod-transf} and satisfies
\begin{equation}\label{eq:Ulimit-Hankel}
\lim_{x\to \infty} |U(x)| = C \lim_{r\to r_+} \Delta^{s/2} |R(r)|
\end{equation}
where
\begin{equation}\label{eq:Cdef-Hankel}
C = \left ( \frac{r_+ - r_-}{2\omega} \right )^{-s+1} \frac{2\pi}{|\Gamma(s-2\xiplus)|}|r_+-r_-|^{-s}
\end{equation}
and the limit on the left hand side of equation \eqref{eq:Ulimit-Hankel} is taken along the real axis.
\end{cor}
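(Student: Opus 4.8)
The plan is to chase the chain of definitions relating $U$ to the integral $I(\nu,\alpha)$ and then feed in the asymptotic value supplied by Lemma \ref{lem:lim-Hankel}. That $U$ solves \eqref{eq:schrod-transf} is already contained in the construction of the integral transform, since for the contour of figure \ref{fig:r_+-contour} the boundary term \eqref{eq:bc} vanishes by the exponential decay of the kernel; so the real content is the limit formula \eqref{eq:Ulimit-Hankel}. Throughout I would take $x$ real and let $x\to+\infty$; since $\omega>0$, the variable $\nu(x)=2\omega(x-r_-)/(r_+-r_-)$ then tends to $+\infty$, which is exactly the regime in which Lemma \ref{lem:lim-Hankel} applies, with $\alpha = 2\xiplus - s$.

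First I would reduce $|U(x)|$ to $|I(\nu(x),\alpha)|$. Unwinding \eqref{eq:Udef}, \eqref{eq:u-def} and $\tilde g(x)=e^{i\nu(x)(r_+-r_-)}I(\nu(x),2\xiplus-s)$ gives
\[
U(x) = (x^2+a^2)^{1/2}(x-r_-)^{-s}(x-r_+)^{-2iM\omega}e^{-i\omega x}e^{i\nu(x)(r_+-r_-)}I(\nu(x),\alpha).
\]
The key point is that for real $x>r_+$ every factor here except $(x^2+a^2)^{1/2}(x-r_-)^{-s}$ and $I$ is unimodular: $e^{-i\omega x}$, $e^{i\nu(x)(r_+-r_-)}$ and $(x-r_+)^{-2iM\omega}=e^{-2iM\omega\ln(x-r_+)}$ all have modulus one because $\omega$ and $M$ are real and $\nu(x)$ is real. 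Hence $|U(x)| = (x^2+a^2)^{1/2}(x-r_-)^{-s}\,|I(\nu(x),\alpha)|$, which is asymptotic to $x^{1-s}|I(\nu(x),\alpha)|$ as $x\to\infty$.

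Next I would insert Lemma \ref{lem:lim-Hankel}, which gives $\nu^{\alpha+1}I(\nu,\alpha)\to (-i)^{\alpha+1}\frac{2\pi i}{\Gamma(-\alpha)}h(r_+)$, so that $|I(\nu,\alpha)|\sim |\nu^{-\alpha-1}|\,\bigl|(-i)^{\alpha+1}\tfrac{2\pi i}{\Gamma(-\alpha)}\bigr|\,|h(r_+)|$. The decisive arithmetic is that for real frequencies $\xiplus$ is purely imaginary by \eqref{eq:xidef}, whence $\Re\alpha=\Re(2\xiplus-s)=-s$; since $\nu$ is real and positive, $|\nu^{-\alpha-1}|=\nu^{-\Re\alpha-1}=\nu^{s-1}$, the imaginary part $\Im\alpha$ contributing only the unimodular phase $\nu^{-i\Im\alpha}$. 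Using $\nu(x)\sim \tfrac{2\omega}{r_+-r_-}x$, this yields $|I|\sim \bigl(\tfrac{2\omega}{r_+-r_-}\bigr)^{s-1}x^{s-1}\,\bigl|(-i)^{\alpha+1}\tfrac{2\pi i}{\Gamma(-\alpha)}\bigr|\,|h(r_+)|$, and the powers of $x$ cancel against the $x^{1-s}$ of the previous step, leaving a finite, nonzero limit. I would then convert $|h(r_+)|$ by \eqref{eq:hlim}, $|h(r_+)|=|r_+-r_-|^{-s}\lim_{r\to r_+}\Delta^{s/2}|R(r)|$, and collect constants, writing $|\Gamma(-\alpha)|=|\Gamma(s-2\xiplus)|$, to obtain \eqref{eq:Cdef-Hankel}.

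The main obstacle is the bookkeeping in the third step. One must evaluate the modulus of the prefactor $(-i)^{\alpha+1}$ with the branch of $(\,\cdot\,)^\alpha$ that is forced by the substitution $\nu(r-r_+)=-it$ of \eqref{eq:Inualpha-eval} together with the principal branch underlying Hankel's formula \eqref{eq:hankel}; since $\alpha$ is complex this branch choice genuinely affects $|(-i)^{\alpha+1}|$, so it is not cosmetic. Equally important is verifying that it is precisely the vanishing of $\Re\xiplus$ on the real axis that makes the $x$-powers cancel exactly—any nonzero real part of $\xiplus$ would turn the limit into a genuine power growth or decay of $|U(x)|$, so the finiteness of \eqref{eq:Ulimit-Hankel} is itself a feature of real frequencies. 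Establishing these two points carefully is where the work lies; the remaining manipulations are routine.
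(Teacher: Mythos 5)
Your proposal is correct and follows essentially the same route as the paper's own proof: the boundary term \eqref{eq:bc} vanishes by the exponential decay of the kernel, so $U$ solves \eqref{eq:schrod-transf}; then $|U(x)|\sim x^{1-s}\,|I(\nu(x),2\xiplus-s)|$ because all oscillatory factors (including $\nu^{2\xiplus}$, since $\Re\xiplus=0$ for real $\omega$) are unimodular, and Lemma \ref{lem:lim-Hankel} together with \eqref{eq:hlim} yields the limit and the constant. One remark: the branch subtlety you flag for $|(-i)^{\alpha+1}|$ is genuine and is in fact passed over silently in the paper's two-line computation — under the branch forced by the substitution $\nu(r-r_+)=-it$ one finds $|(-i)^{1-s+2\xiplus}|=e^{\pi\Im\xiplus}\neq 1$ in general, so the printed constant \eqref{eq:Cdef-Hankel} omits this positive factor; this is harmless for the application in Theorem \ref{thm:realmodestab}, which only requires $C$ to be finite and nonzero, but your instinct that the issue is not cosmetic is right.
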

\begin{proof}
Due to the exponential decay of the kernel as $\Im r \to \infty$, the integral \eqref{eq:inttrans} converges, and the boundary condition \eqref{eq:bc} is satisfied. Therefore $\tilde g$ solves \eqref{eq:transformedHeun} and $U$ solves \eqref{eq:schrod-transf}. In order to evaluate the limit of $U$ at $x=\infty$, we use Lemma \ref{lem:lim-Hankel}.
We have 
\begin{align*}
\lim_{x\to \infty}|U(x)| ={}& \lim_{x\to \infty} |x^{-s+1} \tilde g(x)| \\
={}& \left ( \frac{r_+ - r_-}{2\omega} \right )^{-s+1} \lim_{\nu \to \infty} \nu^{-s+1} |I(\nu,2\xiplus - s)| \\
={}&  \left ( \frac{r_+ - r_-}{2\omega} \right )^{-s+1}
 \frac{2\pi}{|\Gamma(s-2\xiplus)|} |r_+ - r_-|^{-s} \lim_{r \to r_+} \KDelta^{s/2} |R(r)|
 \end{align*}
\end{proof}
\begin{remark} With the above choice of contour, Corollary \ref{cor:Ulimit-Hankel} is valid for arbitrary $s$.
For $s > 0$, $C$ defined by \eqref{eq:Cdef-Hankel} is bounded and non-zero. However, if $\xiplus =0$, then with $s$ a non-positive integer, the constant $C$ will vanish. Therefore, we shall in the next subsection consider a different contour which is more suitable for the case $s \leq 0$.
\end{remark}
\subsection{The case $s\leq 0$}
We consider the contour $\Cont$ consisting of the half line connecting $r_+$ with $i\infty$, i.e.
\begin{equation}\label{eq:r++it}
\Cont = \{ r \in [r_+,i \infty) \},
\end{equation}
see figure \ref{fig:r++it}.
\begin{figure}[!h]
\centering

\raisebox{-0.5\height}{\includegraphics{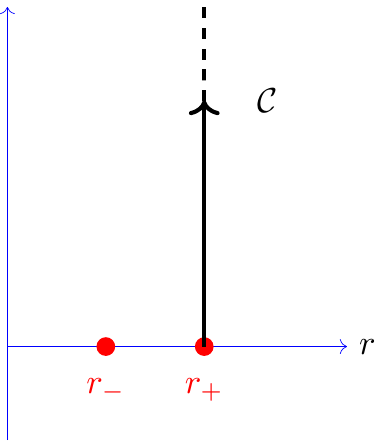}}

\caption{} \label{fig:r++it}
\end{figure}
Due to the exponential decay of the kernel $e^{i\nu(r-r_+)}$ as $\Im r \to \infty$, the boundary condition at $i\infty$ is automatically satisfied.
Further, due to $g(r) \sim 1$ at $r=r_+$, the boundary condition at $r=r_+$ is satisfied. 
Thus, with this choice of contour we have that $\tilde g(z)$ is a solution to the transformed equation \eqref{eq:transformedHeun}.

Starting with $I(\nu,\alpha)$ defined by \eqref{eq:I-def}, we calculate
\begin{align*}
I(\nu,\alpha) ={}& \int_{\Cont} e^{i\nu(r-r_+)}(r-r_+)^{\alpha} h(r) dr \\
\intertext{substitute $\nu(r -r_+)= it$}
={}& i^{\alpha+1} \nu^{-\alpha-1} \int_0^\infty e^{-t} t^\alpha h(r_+ + \nu^{-1} i t) dt
\end{align*}
A limiting argument together with Euler's integral formula \eqref{eq:euler} now yields the following result which is the analog of Lemma \ref{lem:lim-Hankel}.
\begin{lemma}\label{lem:lim-Euler}
Assume that $h$ satisfies the conclusions of Corollary \ref{cor:h-est}.
Then it holds that
$$
\lim_{\nu \to \infty} \nu^{\alpha+1} I(\nu,\alpha) = i^{\alpha+1} \Gamma(\alpha+1) h(r_+)
$$
\end{lemma}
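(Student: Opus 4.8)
The plan is to follow the proof of Lemma \ref{lem:lim-Hankel} essentially verbatim, replacing the rotated Hankel contour by the ray $[r_+,i\infty)$ and Hankel's integral \eqref{eq:hankel} by Euler's integral \eqref{eq:euler}. Starting from the substituted form already computed just above the statement,
\[
\nu^{\alpha+1} I(\nu,\alpha) = i^{\alpha+1} \int_0^\infty e^{-t}\, t^\alpha\, h(r_+ + \nu^{-1} i t)\, dt ,
\]
I would pass to the limit $\nu\to\infty$ under the integral sign and then evaluate the resulting integral with \eqref{eq:euler}. The pointwise limit of the integrand is immediate: for each fixed $t>0$ we have $r_+ + \nu^{-1} i t \to r_+$, and since we are in the case $s\leq 0$, $h$ is analytic in the disk $|r-r_+| < |r_+-r_-|$ by the Remark following Corollary \ref{cor:h-est}, hence continuous at $r_+$, so that $h(r_+ + \nu^{-1} i t) \to h(r_+)$.

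The one point requiring care---and the step I expect to be the main obstacle---is the justification of the interchange of limit and integral by dominated convergence, i.e. the construction of a $\nu$-uniform integrable majorant for $t\mapsto e^{-t}\, t^{\Re\alpha}\, |h(r_+ + \nu^{-1} i t)|$. For $\nu\geq 1$ the sample points $r_+ + \nu^{-1} i t$ lie on the vertical ray through $r_+$, which is interior to the sector $S$ of Corollary \ref{cor:h-est}; combining the boundedness of $h$ near $r_+$ (from analyticity there) with the bound $h(r)=O(r^{-s-1})$ in $S$ yields a constant $C$ with $|h(r_+ + \nu^{-1} i t)| \leq C(1+t)^{\max(0,\,-s-1)}$ for all $\nu\geq 1$ and all $t\geq 0$.

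In the application one has $\alpha = 2\xiplus - s$, and since $\xiplus$ is purely imaginary for real $\omega$, $\Re\alpha = -s \geq 0 > -1$; hence the majorant $C\, e^{-t}\, t^{\Re\alpha}(1+t)^{\max(0,\,-s-1)}$ is integrable on $[0,\infty)$ and dominated convergence applies. Passing the limit inside then gives
\[
\lim_{\nu\to\infty} \nu^{\alpha+1} I(\nu,\alpha) = i^{\alpha+1}\, h(r_+) \int_0^\infty e^{-t}\, t^\alpha\, dt = i^{\alpha+1}\, \Gamma(\alpha+1)\, h(r_+)
\]
by Euler's integral \eqref{eq:euler}, which is the claim.

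I note that the connection with Watson's Lemma \cite[\S 1.9]{MR1034956} remarked on in the text shows that this limit is in fact only the leading term of a complete asymptotic expansion in powers of $\nu^{-1}$, obtained by Taylor expanding $h$ about $r_+$ and integrating term by term; for the present purpose only the leading coefficient $h(r_+)$ is needed, so I would keep the argument to the single dominated-convergence step above rather than develop the full expansion.
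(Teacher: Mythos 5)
Your proposal takes essentially the same route as the paper: the paper obtains the lemma by the same substitution $\nu(r-r_+)=it$, passing the limit under the integral sign, and invoking Euler's integral \eqref{eq:euler}, describing the interchange only as ``a limiting argument'' (just as in its proof of Lemma \ref{lem:lim-Hankel}). Your dominated-convergence justification --- using analyticity of $h$ at $r_+$ for $s\leq 0$, the $O(r^{-s-1})$ bound along the vertical ray, and $\Re\alpha=-s>-1$ --- correctly supplies exactly the step the paper leaves implicit, so the proof is correct and matches the paper's approach.
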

Using the definitions we obtain the following analog of Corollary \ref{cor:Ulimit-Hankel}
\begin{cor}\label{cor:Ulimit-Euler} Assume that $R$ is a solution of the radial Teukolsky equation with $\omega >0$ and with no incoming radiation. Assume $s \leq 0$ and let $U$ be given by \eqref{eq:Udef}, and defined via the integral transform \eqref{eq:inttrans} with the contour $C$ given by \eqref{eq:r++it}. Then $U$ solves \eqref{eq:schrod-transf}, and we have
\begin{equation}\label{eq:Ulimit-Euler}
\lim_{x\to \infty} |U(x)| = C \lim_{x\to r_+} \Delta^{s/2} |R(r)|
\end{equation}
where
\begin{equation}\label{eq:Cdef-Euler}
C = \left ( \frac{r_+ - r_-}{2\omega} \right )^{-s+1} |\Gamma(2\xiplus-s+1)| |r_+-r_-|^{-s}
\end{equation}
 \end{cor}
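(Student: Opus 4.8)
The plan is to follow the proof of Corollary~\ref{cor:Ulimit-Hankel} essentially verbatim, replacing the Hankel-contour limit (Lemma~\ref{lem:lim-Hankel}) by its Euler-integral analog (Lemma~\ref{lem:lim-Euler}) and keeping careful track of the moduli of the complex powers that appear in $U$ when $x$ runs to infinity along the real axis. Since $s \le 0$, the remark following Corollary~\ref{cor:h-est} gives that $h$ is analytic at $r_+$, so $h(r_+)$ is well defined, which is what Lemma~\ref{lem:lim-Euler} requires.

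First I would record that the transform is legitimate for the contour \eqref{eq:r++it}; this is exactly the discussion preceding the statement. Along the upper branch $\Im r \to +\infty$, so the kernel $e^{i\nu(r-r_+)}$ decays exponentially, the integral \eqref{eq:inttrans} converges, and the boundary term at $i\infty$ in \eqref{eq:bc} vanishes; at the lower endpoint $g(r)\sim 1$, so the boundary term there vanishes as well. Hence $\tilde g$ solves \eqref{eq:transformedHeun}, and $U$, defined through \eqref{eq:u-def}--\eqref{eq:Udef}, solves \eqref{eq:schrod-transf}.

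Next I would reduce the limit of $|U|$ to a limit of $I(\nu,\alpha)$. Along the real axis the factors $(x-r_+)^{-2iM\omega}$ and $e^{-i\omega x}$ in \eqref{eq:u-def} have unit modulus, while $(x^2+a^2)^{1/2}(x-r_-)^{-s}\sim x^{-s+1}$, so $|U(x)|\sim|x^{-s+1}\tilde g(x)|$. Using $\tilde g(x)=e^{i\nu(x)(r_+-r_-)}I(\nu(x),2\xiplus-s)$ together with $|e^{i\nu(r_+-r_-)}|=1$ for real $\nu$, and the asymptotics $x\sim \nu(r_+-r_-)/(2\omega)$, I obtain
\begin{equation*}
\lim_{x\to\infty}|U(x)| = \Big(\tfrac{r_+-r_-}{2\omega}\Big)^{-s+1}\lim_{\nu\to\infty}\nu^{-s+1}\,|I(\nu,2\xiplus-s)|.
\end{equation*}
Because $\xiplus$ is purely imaginary for real $\omega$, with $\alpha=2\xiplus-s$ we have $\Re(\alpha+1)=-s+1$, so $\nu^{-s+1}=|\nu^{\alpha+1}|$ for $\nu>0$, and the right-hand limit equals $\lim_{\nu\to\infty}|\nu^{\alpha+1}I(\nu,\alpha)|$.

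Finally I would invoke Lemma~\ref{lem:lim-Euler} to evaluate this limit through $i^{\alpha+1}\Gamma(\alpha+1)h(r_+)$, take the modulus of the numerical prefactor, and substitute \eqref{eq:hlim}, which gives $|h(r_+)|=|r_+-r_-|^{-s}\lim_{r\to r_+}\Delta^{s/2}|R(r)|$; collecting the constants then produces $C$ as in \eqref{eq:Cdef-Euler}. The reason for choosing this contour rather than the Hankel one is precisely that $1-s\ge 1$ keeps $2\xiplus-s+1$ away from the non-positive integers, so $\Gamma(2\xiplus-s+1)$ is finite and nonzero even at the resonant value $\xiplus=0$, where the Hankel constant \eqref{eq:Cdef-Hankel} degenerates. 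The point requiring the most care is the bookkeeping of the moduli of the complex powers $\nu^{\alpha+1}$ and $i^{\alpha+1}$, since $\alpha=2\xiplus-s$ is complex; the genuine analytic content, however, lies in Lemma~\ref{lem:lim-Euler} itself, whose justification rests on dominated convergence, using the bound $h=O(r^{-s-1})$ of Corollary~\ref{cor:h-est} to control $h(r_+ +\nu^{-1}it)$ uniformly in $\nu$ together with $\Re(\alpha)=-s\ge 0>-1$, which guarantees convergence of $\int_0^\infty e^{-t}t^\alpha\,dt$.
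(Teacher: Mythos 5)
Your proposal is correct and is essentially the paper's own proof: you verify convergence and the boundary condition \eqref{eq:bc} for the contour \eqref{eq:r++it}, reduce $\lim_{x\to\infty}|U(x)|$ to $\left(\frac{r_+-r_-}{2\omega}\right)^{1-s}\lim_{\nu\to\infty}\nu^{1-s}|I(\nu,2\xiplus-s)|$ using that the oscillatory factors have unit modulus on the real axis, and then evaluate that limit via Lemma \ref{lem:lim-Euler} and \eqref{eq:hlim}, exactly as the paper does. The only caveat --- shared with the paper's own proof, and immaterial since only $0<C<\infty$ is ever used downstream --- is that under a strict bookkeeping of the moduli you flag, the principal-branch prefactor satisfies $|i^{2\xiplus-s+1}|=e^{\pi k_+/(2\kappa_+)}\neq 1$ in general, so \eqref{eq:Cdef-Euler} would acquire this harmless positive factor.
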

\begin{proof} The fact that $U$ solves \eqref{eq:schrod-transf} follows as in Corollary \ref{cor:Ulimit-Hankel} due to the exponential decay of the kernel as $\Im r \to \infty$. Lemma \ref{lem:lim-Euler} yields
\begin{align*}
\lim_{x\to \infty}|U(x)| ={}& \lim_{x\to \infty} |x^{-s+1} \tilde g(x)| \\
={}& \left ( \frac{r_+ - r_-}{2\omega} \right )^{-s+1} \lim_{\nu \to \infty} \nu^{-s+1} |I(\nu,2\xiplus - s)| \\
={}&  \left ( \frac{r_+ - r_-}{2\omega} \right )^{-s+1} |\Gamma(2\xiplus-s+1)| |r_+ - r_-|^{-s} \lim_{r \to r_+} (\KDelta^{s/2} |R(r)|)
 \end{align*}
\end{proof}
\begin{remark} For $s \leq 0$, $C$ defined by \eqref{eq:Cdef-Euler} is bounded and non-zero.
\end{remark}

\section{Mode stability on the real axis} \label{sec:modestab}
We are now ready to prove our main result.
\begin{thm}\label{thm:realmodestab} Assume that $\omega > 0$ and that $R$ is a solution of the radial Teukolsky equation with no incoming radiation. Then $R = 0$.
\end{thm}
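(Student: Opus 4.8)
The plan is to exploit the transformed Schr\"odinger equation \eqref{eq:schrod-transf}, whose potential $\tilde V$ is \emph{real} for $\omega\in\Reals$ (Remark \ref{rem:Wronsk}), together with the limit computed in Corollary \ref{cor:Ulimit-Hankel} (for $s>0$) and Corollary \ref{cor:Ulimit-Euler} (for $s\leq 0$). The whole point of Whiting's integral transform is that it converts the original radial equation --- which for $s\neq 0$ has a complex potential and genuinely superradiant behaviour --- into an equation with a real potential, for which the Wronskian $W[U,\bar U]=U'\bar U-\bar U'U$ is conserved along $x_*$. I would therefore set up a flux identity for $U$ between the horizon $x=r_+$ and infinity, show that the two boundary fluxes carry opposite signs relative to this conserved Wronskian, and conclude that both must vanish.

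First I would record the asymptotics of $U$ in the tortoise variable $x_*$. Since $\tilde V\to\omega^2$ as $x\to\infty$, a solution behaves like $A_+e^{i\omega x_*}+A_-e^{-i\omega x_*}$; but Corollary \ref{cor:Ulimit-Hankel} (resp. \ref{cor:Ulimit-Euler}) shows $|U|$ has a finite limit along the real axis, which rules out a nonzero cross term and forces $U$ to be a single exponential. Tracking the phase coming from the transform, via $\tilde g(x)=e^{i\nu(x)(r_+-r_-)}I(\nu(x),2\xiplus-s)$ and $e^{i\omega x}=e^{i\omega x_*}x^{-2iM\omega}$, one finds $U\sim A_\infty\,x^{-4iM\omega-2\xiplus}e^{i\omega x_*}$, i.e. the \emph{outgoing} exponential, so that $W[U,\bar U]=2i\omega|A_\infty|^2$. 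At the horizon, where $\tilde V\to (r_+-r_-)^2r_+^{-2}\omega^2=:\tilde k^2$ with $\tilde k>0$, I would use that $\tilde g$ is analytic at $x=r_+$ (the kernel $f(x,r)$ is entire in $x$), so that $u\sim\mathrm{const}\cdot(x-r_+)^{-2iM\omega}$; combined with $x-r_+\sim\mathrm{const}\cdot e^{2\kappa_+ x_*}$ and $4M\omega\kappa_+=\tilde k$ this gives $U\sim D\,e^{-i\tilde k x_*}$, the single \emph{ingoing} exponential, hence $W[U,\bar U]=-2i\tilde k|D|^2$.

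Conservation of the Wronskian (Remark \ref{rem:Wronsk}) equates these two constants, $2i\omega|A_\infty|^2=-2i\tilde k|D|^2$; since $\omega,\tilde k>0$ the two sides have opposite signs, forcing $A_\infty=0$ (and $D=0$). Thus $\lim_{x\to\infty}|U|=0$. Feeding this into \eqref{eq:Ulimit-Hankel} (resp. \eqref{eq:Ulimit-Euler}) and using that the constant $C$ there is nonzero, I obtain $\lim_{r\to r_+}\Delta^{s/2}|R(r)|=0$. By point \ref{point:Frob=0} of Remark \ref{rem:R-properties}, the vanishing of this leading coefficient forces the Frobenius solution at $r_+$ (and, in the resonant case $s>0$, $\xiplus=0$, also the accompanying constant $A$) to vanish, whence $R\equiv 0$, as claimed.

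I expect the main obstacle to be pinning down that $U$ is a genuine single exponential at \emph{each} end and that the two fluxes have \emph{opposite} signs relative to the conserved Wronskian: this sign is exactly what the transform buys us, and it must be extracted carefully from the phases produced by the integral transform at infinity and from the horizon exponent $-2iM\omega$. Subsidiary care is needed to verify that the power-law factor $x^{-4iM\omega-2\xiplus}$ at infinity and the analytic remainder of $\tilde g$ at $r_+$ (both pure phases on the real axis, with $x_*$-derivatives tending to $0$) contribute only $o(1)$ terms to $W$ and so do not alter the leading flux, and to treat the degenerate cases uniformly with the boundary behaviour recorded in Lemma \ref{lem:R-asympt}.
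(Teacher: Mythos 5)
Your proposal is correct and follows essentially the same route as the paper's own proof: the integral transform yields $U$ solving the Schr\"odinger equation \eqref{eq:schrod-transf} with real potential, the conserved Wronskian $W[U,\bar U]$ is evaluated at the horizon (purely ingoing, flux $-2i\tilde k|D|^2$) and at infinity (purely outgoing, flux $2i\omega|A_\infty|^2$), the opposite signs force both to vanish, and the nonvanishing constants in Corollaries \ref{cor:Ulimit-Hankel} and \ref{cor:Ulimit-Euler} together with Remark \ref{rem:R-properties} give $R\equiv 0$. The only deviation is the resonant case $\xiplus=0$ with $s$ a positive integer, which you fold into the same Hankel-contour Wronskian argument (legitimately, since $C$ in \eqref{eq:Cdef-Hankel} is nonzero for $s>0$ and Remark \ref{rem:R-properties} handles the logarithmic Frobenius solution), whereas the paper treats that case separately via the Teukolsky--Starobinsky identities or the scattering relation \eqref{eq:scattering}.
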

\begin{proof} We first consider the case $s\leq 0$.  Let $U(x)$ be given by \eqref{eq:Udef} and constructed via the integral transform \eqref{eq:inttrans} as explained above. By 
corollary \ref{cor:Ulimit-Euler},
$U(x)$ solves
\eqref{eq:schrod-transf}. By Remark \ref{rem:Wronsk} the Wronskian $W[U, \bar U] = U'\bar U -\bar U' U $ is conserved, $W' =0$.
From the definition of $U$, we can write it as
$$
U(x) = Z(x)I(\nu(x),2\xiplus -s).
$$
where
\begin{equation}\label{eq:Zdef}
Z(x) = (x^2+a^2)^{1/2}
(x-r_-)^{-s}(x-r_+)^{-2iM\omega} e^{-i\omega x}
e^{i \nu(x) (r_+-r_-)}
\end{equation}
We have
\begin{equation}\label{eq:WUZI}
W[U,\bar U] = W[Z,\bar Z] I\bar I + W[I, \bar I]Z\bar Z
\end{equation}
From \eqref{eq:I-def} we have by differentiating under the integral sign,
\begin{equation}\label{eq:dIdx}
\frac{d}{dx}I(\nu,\alpha) = i \frac{d\nu}{dx} I(\nu,\alpha+1)
\end{equation}
where
\begin{equation}\label{eq:dnudx}
\frac{d\nu}{dx} = \frac{2\omega}{r_+-r_-}
\end{equation}
Write $Z' = dZ/dx_*$ as in section \ref{sec:inttrans}. For $x\rightarrow r_+$, we have
\begin{align*}
Z' ={}& \frac{\Delta}{x^2+a^2} \left ( \frac{-2iM\omega}{x-r_+} \right ) Z(x) + O(x-r_+) \\
={}& \frac{r_+-r_-}{r_+^2+a^2} ( -2iM\omega) Z(x) + O(x-r_+) \\
={}& -4iM\kappa_+ \omega Z(x) + O(x-r_+)
\end{align*}
This gives
$$
W[Z,\bar Z](r_+) = -16iM^2\kappa_+ \omega r_+(r_+-r_-)^{-2s}
$$
In view of the discussion above $I(\nu(x),\alpha)$ and $dI(\nu(x),\alpha)/dx$ have well defined limits at $x=r_+$, and hence
$$
\lim_{x\to r_+} \frac{dI(\nu(x),\alpha)}{dx_*} = 0
$$
This gives
$$
W[U,\bar U](r_+) = W[Z,\bar Z](r_+) I(\nu(r_+),2\xiplus-s) \bar I (\nu(r_+),2\xiplus-s)
$$
In particular,
\begin{equation}\label{eq:iW-sign}
iW[U,\bar U](r_+) > 0
\end{equation}
for $|a| < M$.
We now consider the limit $x \to \infty$. 
Equations \eqref{eq:Zdef} and \eqref{eq:dnudx} give for large $x$,
\begin{equation}\label{eq:Z'}
Z' = (i\omega + O(1/x)) Z(x)
\end{equation}
which yields 
$$
Z'(x) \bar Z(x) = (i\omega + O(1/x) )Z(x)\bar Z(x)
$$
From \eqref{eq:dIdx} and 
corollary \ref{cor:Ulimit-Euler},
we have that
$$
\frac{dI(\nu(x),\alpha)}{dx_*} = O(1/x) I(\nu(x),\alpha)
$$
for large $x$. This means that
\begin{align*}
\lim_{x\to \infty} W[U,\bar U](x) ={}& \lim_{x\to\infty} W[Z,\bar Z](x) |I(\nu(x),2\xiplus-s)|^2 \\
={}&  2i\omega \lim_{x\to \infty} |Z(x)|^2 |I(\nu(x),2\xiplus-s)|^2 \\
={}& 2i\omega \lim_{x\to \infty} |U(x)|^2
\end{align*}
Write $|U(\infty)|^2 = \lim_{x\to\infty}|U(x)|^2$ and $|U(r_+)|^2 = \lim_{x\to r_+} |U(x)|^2$.
The conservation property of the Wronskian  gives for $\omega > 0$ and $|a| < M$
\begin{align*}
0 ={}& i(W(\infty) - W(r_+)) \\
={}& -2\omega |U(\infty)|^2 - iW(r_+) \\
\end{align*}
In view of \eqref{eq:iW-sign} this can hold only if $|U(\infty)|^2 = 0$ and $|U(r_+)|^2 = 0$, which implies $U(x) \equiv 0$. Taking into account the definition of $U(x)$, this implies that $R_{+,1}(r_+) = 0$, and hence that $R(r) \equiv 0$. 

For the case $s =\sfrak > 0$ we shall present two alternative approaches. Let $R_\sfrak$ denote a solution to the radial Teukolsky equation with no incoming radiation. It is straightforward to check that the TSI relation \eqref{eq:TSI-s} 
yields a solution $R_{-\sfrak}$ with no incoming radiation. 
In order to demonstrate that $R_{\sfrak} = 0$ it suffices to show that the solution $R_{-\sfrak}$ defined by 
\begin{equation}\label{eq:noTSI}
R_{-\sfrak} = \Delta^{\sfrak/2} {\mathcal{D}^\dagger}^{2\sfrak}\Delta^{\sfrak/2} R_\sfrak 
\end{equation} 
is non-vanishing. This follows due to the asymptotic form of  $R_{\sfrak}$ given by \eqref{eq:Rout-asympt}, and the fact that $\mathcal{D}^\dagger$ is to leading order 
$$
\mathcal{D}^{\dagger} = \frac{d}{dr} + i\omega + O(1/r)
$$
Arguing as in the first part of the proof, we find that $R_{-\sfrak}$ must vanish, and hence also $R_{\sfrak}$\footnote{The inference we want to draw from equation  
\protect{\eqref{eq:noTSI}} 
is that if $R_{-\sfrak}=0$ then $R_{\sfrak}=0$ must also hold.  The argument for this fails at algebraically special modes \citep{1984RSPSA.392....1C}.  However, this fact is not an obstacle to our inference since algebraically special modes do not have no incoming radiation in the sense of definition  \ref{def:noincoming} and occur in case $C_{\sfrak}$ vanishes, which does not happen for real frequencies.}. 

An second, alternate argument for the case $s>0$ can be given as follows. 
For the non-resonant case, with $s>0$ we can argue along exactly the same lines as in the first part of the proof, but with corollary \ref{cor:Ulimit-Hankel} playing the role of corollary \ref{cor:Ulimit-Euler}. Finally, for the resonant case, with $\xiplus=0$ and $s$ a positive integer, we can apply the scattering relation \eqref{eq:scattering}. In the resonant case, we have $k_+ = 0$  which yields that the transmission coefficient vanishes, $\mathcal{T}_\sfrak = 0$. Assuming no incoming radiation, this yields 
$$
Y_{\text{hole},\sfrak} =0
$$
We now show that for spins $\sfrak=1,2$, the TSI constant $C_{\sfrak}$ given in equation \eqref{eq:Csfrak} is non-vanishing in the case of resonant frequencies, $k_+ = 0$, where $k_+$ is given by \eqref{eq:kpm-def}.
If $k_+ = 0$, then $\omega=am/(2Mr_+)$. 
For $\sfrak=1$, this gives
$$|C_1|^2= (\Lambda + 1)^2 + 4 a m \omega -4 a^2 \omega^2=(\Lambda + 1)^2+(8Mr_+-4a^2)\omega^2>0.$$
For $\sfrak=2$, $\Im C_2 = 12 M \omega$, cf. \citet[Eq. (3.25)]{1974ApJ...193..443T}, see also \citet[p. 462-463]{MR1647491}. This shows that $|C_2|^2 > 0$. 
Hence, 
$R_\sfrak=0$, which completes the proof of Theorem \ref{thm:realmodestab}. 
\end{proof}
\begin{remark}\label{rem:inhomog}
The radial Teukolsky equation \eqref{eq:radialeq} has conserved Wronskian
$$
W_{\RadialOp}[R_1, R_2] = \Delta \partial_r R_1 R_2 - \Delta R_1 \partial_r R_2,
$$
i.e. $\partial_r W_{\RadialOp}[R_1, R_2] = 0$ if $R_1, R_2$ solve \eqref{eq:radialeq}.
Let $R_{hor}$ and $R_{out}$ be solutions of the radial Teukolsky equation %
which are ingoing at the horizon and outgoing at infinity, respectively. Theorem \ref{thm:realmodestab} implies that $W_{\RadialOp}[R_{hor},R_{out}]$ is non-vanishing.

Consider an inhomogenous version of the radialy Teukolsky equation,
\begin{equation}\label{eq:radial-inhomog}
\RadialOp R = F
\end{equation}
In view of the above, we can use the method of variation of parameter to find a particular solution to \eqref{eq:radial-inhomog},
$$
R(r) = \frac{1}{W_{\RadialOp}[R_{out},R_{hor}]} \left (R_{hor}(r) \int_{r}^\infty R_{out}(t) F(t) dt + R_{out}(r) \int_{r_+}^r R_{hor}(t) F(t) dt \right )
$$
Due to the regular dependence of $W_{\RadialOp}$ on $\omega$ this can in principle be used to estimate the solution of the inhomogenous Teukolsky equation. This fact is related to the so-called quantitative mode stability, cf. \cite{2015AnHP...16..289S}.
\end{remark}

\subsection*{Acknowledgements} A substantial part of this work was carried out during the trimestre on Mathematical Relativity at the Institute Henri Poincar\'e, Paris, during the fall of 2015. L.A, S.M. and C.P. thank the IHP for support and hospitality. L.A. was partially supported by the CNRS. The work of BFW was supported in part by NSF Grants PHY 1205906 and PHY 1314529 to the University of Florida.  Support from the CNRS through the IAP, where part of this work was carried out, is also acknowledged, along with support from the French state funds managed by the ANR within the Investissements d'Avenir programme under Grant No. ANR-11-IDEX-0004-02. We are grateful to Steffen Aksteiner, Pieter Blue and Dietrich Hafner for helpful discussions.

\newcommand{\arxivref}[1]{\href{http://www.arxiv.org/abs/#1}{{arXiv.org:#1}}}
\newcommand{\mnras}{Monthly Notices of the Royal Astronomical Society}
\newcommand{\prd}{Phys. Rev. D}
\newcommand{\apj}{Astrophysical J.}


\begin{thebibliography}{35}%
\makeatletter
\providecommand \@ifxundefined [1]{%
 \@ifx{#1\undefined}
}%
\providecommand \@ifnum [1]{%
 \ifnum #1\expandafter \@firstoftwo
 \else \expandafter \@secondoftwo
 \fi
}%
\providecommand \@ifx [1]{%
 \ifx #1\expandafter \@firstoftwo
 \else \expandafter \@secondoftwo
 \fi
}%
\providecommand \natexlab [1]{#1}%
\providecommand \enquote  [1]{``#1''}%
\providecommand \bibnamefont  [1]{#1}%
\providecommand \bibfnamefont [1]{#1}%
\providecommand \citenamefont [1]{#1}%
\providecommand \href@noop [0]{\@secondoftwo}%
\providecommand \href [0]{\begingroup \@sanitize@url \@href}%
\providecommand \@href[1]{\@@startlink{#1}\@@href}%
\providecommand \@@href[1]{\endgroup#1\@@endlink}%
\providecommand \@sanitize@url [0]{\catcode `\\12\catcode `\$12\catcode
  `\&12\catcode `\#12\catcode `\^12\catcode `\_12\catcode `\%12\relax}%
\providecommand \@@startlink[1]{}%
\providecommand \@@endlink[0]{}%
\providecommand \url  [0]{\begingroup\@sanitize@url \@url }%
\providecommand \@url [1]{\endgroup\@href {#1}{\urlprefix }}%
\providecommand \urlprefix  [0]{URL }%
\providecommand \Eprint [0]{\href }%
\providecommand \doibase [0]{http://dx.doi.org/}%
\providecommand \selectlanguage [0]{\@gobble}%
\providecommand \bibinfo  [0]{\@secondoftwo}%
\providecommand \bibfield  [0]{\@secondoftwo}%
\providecommand \translation [1]{[#1]}%
\providecommand \BibitemOpen [0]{}%
\providecommand \bibitemStop [0]{}%
\providecommand \bibitemNoStop [0]{.\EOS\space}%
\providecommand \EOS [0]{\spacefactor3000\relax}%
\providecommand \BibitemShut  [1]{\csname bibitem#1\endcsname}%
\let\auto@bib@innerbib\@empty
\bibitem [{\citenamefont {{Aksteiner}}\ and\ \citenamefont
  {{Andersson}}(2011)}]{2011CQGra..28f5001A}%
  \BibitemOpen
  \bibfield  {author} {\bibinfo {author} {\bibnamefont {{Aksteiner}},
  \bibfnamefont {S.}}\ and\ \bibinfo {author} {\bibnamefont {{Andersson}},
  \bibfnamefont {L.}},\ }\bibfield  {title} {\enquote {\bibinfo {title}
  {{Linearized gravity and gauge conditions}},}\ }\href {\doibase
  10.1088/0264-9381/28/6/065001} {\bibfield  {journal} {\bibinfo  {journal}
  {Classical and Quantum Gravity}\ }\textbf {\bibinfo {volume} {28}},\ \bibinfo
  {eid} {065001} (\bibinfo {year} {2011})},\ \Eprint
  {http://arxiv.org/abs/1009.5647} {arXiv:1009.5647 [gr-qc]} \BibitemShut
  {NoStop}%
\bibitem [{\citenamefont {{Aksteiner}}, \citenamefont {{Andersson}},\ and\
  \citenamefont {{B{\"a}ckdahl}}(2016)}]{2016arXiv160106084A}%
  \BibitemOpen
  \bibfield  {author} {\bibinfo {author} {\bibnamefont {{Aksteiner}},
  \bibfnamefont {S.}}, \bibinfo {author} {\bibnamefont {{Andersson}},
  \bibfnamefont {L.}}, \ and\ \bibinfo {author} {\bibnamefont {{B{\"a}ckdahl}},
  \bibfnamefont {T.}},\ }\bibfield  {title} {\enquote {\bibinfo {title} {{On
  the structure of linearized gravity on vacuum spacetimes of Petrov type
  D}},}\ }\href@noop {} {\  (\bibinfo {year} {2016})},\ \bibinfo {note}
  {\arxivref{1601.06084}}\BibitemShut {NoStop}%
\bibitem [{\citenamefont {{Andersson}}, \citenamefont {{B{\"a}ckdahl}},\ and\
  \citenamefont {{Joudioux}}(2014)}]{2014CMaPh.331..755A}%
  \BibitemOpen
  \bibfield  {author} {\bibinfo {author} {\bibnamefont {{Andersson}},
  \bibfnamefont {L.}}, \bibinfo {author} {\bibnamefont {{B{\"a}ckdahl}},
  \bibfnamefont {T.}}, \ and\ \bibinfo {author} {\bibnamefont {{Joudioux}},
  \bibfnamefont {J.}},\ }\bibfield  {title} {\enquote {\bibinfo {title} {{Hertz
  Potentials and Asymptotic Properties of Massless Fields}},}\ }\href {\doibase
  10.1007/s00220-014-2078-x} {\bibfield  {journal} {\bibinfo  {journal}
  {Communications in Mathematical Physics}\ }\textbf {\bibinfo {volume}
  {331}},\ \bibinfo {pages} {755--803} (\bibinfo {year} {2014})},\ \Eprint
  {http://arxiv.org/abs/1303.4377} {arXiv:1303.4377 [math.AP]} \BibitemShut
  {NoStop}%
\bibitem [{\citenamefont {{Castro}}\ \emph {et~al.}(2013)\citenamefont
  {{Castro}}, \citenamefont {{Lapan}}, \citenamefont {{Maloney}},\ and\
  \citenamefont {{Rodriguez}}}]{2013CQGra..30p5005C}%
  \BibitemOpen
  \bibfield  {author} {\bibinfo {author} {\bibnamefont {{Castro}},
  \bibfnamefont {A.}}, \bibinfo {author} {\bibnamefont {{Lapan}}, \bibfnamefont
  {J.~M.}}, \bibinfo {author} {\bibnamefont {{Maloney}}, \bibfnamefont {A.}}, \
  and\ \bibinfo {author} {\bibnamefont {{Rodriguez}}, \bibfnamefont {M.~J.}},\
  }\bibfield  {title} {\enquote {\bibinfo {title} {{Black hole scattering from
  monodromy}},}\ }\href {\doibase 10.1088/0264-9381/30/16/165005} {\bibfield
  {journal} {\bibinfo  {journal} {Classical and Quantum Gravity}\ }\textbf
  {\bibinfo {volume} {30}},\ \bibinfo {eid} {165005} (\bibinfo {year}
  {2013})},\ \Eprint {http://arxiv.org/abs/1304.3781} {arXiv:1304.3781
  [hep-th]} \BibitemShut {NoStop}%
\bibitem [{\citenamefont {{Chandrasekhar}}(1984)}]{1984RSPSA.392....1C}%
  \BibitemOpen
  \bibfield  {author} {\bibinfo {author} {\bibnamefont {{Chandrasekhar}},
  \bibfnamefont {S.}},\ }\bibfield  {title} {\enquote {\bibinfo {title} {{On
  algebraically special perturbations of black holes}},}\ }\href {\doibase
  10.1098/rspa.1984.0021} {\bibfield  {journal} {\bibinfo  {journal}
  {Proceedings of the Royal Society of London Series A}\ }\textbf {\bibinfo
  {volume} {392}},\ \bibinfo {pages} {1--13} (\bibinfo {year}
  {1984})}\BibitemShut {NoStop}%
\bibitem [{\citenamefont {Chandrasekhar}(1998)}]{MR1647491}%
  \BibitemOpen
  \bibfield  {author} {\bibinfo {author} {\bibnamefont {Chandrasekhar},
  \bibfnamefont {S.}},\ }\href@noop {} {\emph {\bibinfo {title} {The
  mathematical theory of black holes}}},\ Oxford Classic Texts in the Physical
  Sciences\ (\bibinfo  {publisher} {The Clarendon Press, Oxford University
  Press, New York},\ \bibinfo {year} {1998})\ pp.\ \bibinfo {pages}
  {xxii+646},\ \bibinfo {note} {reprint of the 1992 edition}\BibitemShut
  {NoStop}%
\bibitem [{\citenamefont {{Dafermos}}, \citenamefont {{Rodnianski}},\ and\
  \citenamefont {{Shlapentokh-Rothman}}(2014)}]{2014arXiv1402.7034D}%
  \BibitemOpen
  \bibfield  {author} {\bibinfo {author} {\bibnamefont {{Dafermos}},
  \bibfnamefont {M.}}, \bibinfo {author} {\bibnamefont {{Rodnianski}},
  \bibfnamefont {I.}}, \ and\ \bibinfo {author} {\bibnamefont
  {{Shlapentokh-Rothman}}, \bibfnamefont {Y.}},\ }\bibfield  {title} {\enquote
  {\bibinfo {title} {{Decay for solutions of the wave equation on Kerr exterior
  spacetimes III: The full subextremal case |a| $<$ M}},}\ }\href@noop {} {\
  (\bibinfo {year} {2014})},\ \bibinfo {note}
  {\arxivref{1402.7034}}\BibitemShut {NoStop}%
\bibitem [{\citenamefont {Erd{\'e}lyi}(1956)}]{MR0078494}%
  \BibitemOpen
  \bibfield  {author} {\bibinfo {author} {\bibnamefont {Erd{\'e}lyi},
  \bibfnamefont {A.}},\ }\href@noop {} {\emph {\bibinfo {title} {Asymptotic
  expansions}}}\ (\bibinfo  {publisher} {Dover Publications, Inc., New York},\
  \bibinfo {year} {1956})\ pp.\ \bibinfo {pages} {vi+108}\BibitemShut {NoStop}%
\bibitem [{\citenamefont {{Finster}}\ and\ \citenamefont
  {{Smoller}}(2016)}]{2016arXiv160608005F}%
  \BibitemOpen
  \bibfield  {author} {\bibinfo {author} {\bibnamefont {{Finster}},
  \bibfnamefont {F.}}\ and\ \bibinfo {author} {\bibnamefont {{Smoller}},
  \bibfnamefont {J.}},\ }\bibfield  {title} {\enquote {\bibinfo {title}
  {{Linear Stability of the Non-Extreme Kerr Black Hole}},}\ }\href@noop {}
  {\bibfield  {journal} {\bibinfo  {journal} {ArXiv e-prints}\ } (\bibinfo
  {year} {2016})},\ \Eprint {http://arxiv.org/abs/1606.08005} {arXiv:1606.08005
  [math-ph]} \BibitemShut {NoStop}%
\bibitem [{\citenamefont {{Fiziev}}(2009)}]{2009PhRvD..80l4001F}%
  \BibitemOpen
  \bibfield  {author} {\bibinfo {author} {\bibnamefont {{Fiziev}},
  \bibfnamefont {P.~P.}},\ }\bibfield  {title} {\enquote {\bibinfo {title}
  {{Teukolsky-Starobinsky identities: A novel derivation and
  generalizations}},}\ }\href {\doibase 10.1103/PhysRevD.80.124001} {\bibfield
  {journal} {\bibinfo  {journal} {\prd}\ }\textbf {\bibinfo {volume} {80}},\
  \bibinfo {eid} {124001} (\bibinfo {year} {2009})},\ \Eprint
  {http://arxiv.org/abs/0906.5108} {arXiv:0906.5108 [gr-qc]} \BibitemShut
  {NoStop}%
\bibitem [{\citenamefont {{Frauendiener}}, \citenamefont {{Ghosh}},\ and\
  \citenamefont {{Newman}}(1996)}]{1996CQGra..13..461F}%
  \BibitemOpen
  \bibfield  {author} {\bibinfo {author} {\bibnamefont {{Frauendiener}},
  \bibfnamefont {J.}}, \bibinfo {author} {\bibnamefont {{Ghosh}}, \bibfnamefont
  {J.}}, \ and\ \bibinfo {author} {\bibnamefont {{Newman}}, \bibfnamefont
  {E.~T.}},\ }\bibfield  {title} {\enquote {\bibinfo {title} {{Twistors and the
  asymptotic behaviour of massless spin-$3/2$ fields}},}\ }\href {\doibase
  10.1088/0264-9381/13/3/012} {\bibfield  {journal} {\bibinfo  {journal}
  {Classical and Quantum Gravity}\ }\textbf {\bibinfo {volume} {13}},\ \bibinfo
  {pages} {461--480} (\bibinfo {year} {1996})}\BibitemShut {NoStop}%
\bibitem [{\citenamefont {{Hartle}}\ and\ \citenamefont
  {{Wilkins}}(1974)}]{1974CMaPh..38...47H}%
  \BibitemOpen
  \bibfield  {author} {\bibinfo {author} {\bibnamefont {{Hartle}},
  \bibfnamefont {J.~B.}}\ and\ \bibinfo {author} {\bibnamefont {{Wilkins}},
  \bibfnamefont {D.~C.}},\ }\bibfield  {title} {\enquote {\bibinfo {title}
  {{Analytic properties of the Teukolsky equation}},}\ }\href {\doibase
  10.1007/BF01651548} {\bibfield  {journal} {\bibinfo  {journal}
  {Communications in Mathematical Physics}\ }\textbf {\bibinfo {volume} {38}},\
  \bibinfo {pages} {47--63} (\bibinfo {year} {1974})}\BibitemShut {NoStop}%
\bibitem [{\citenamefont {{Hawking}}\ and\ \citenamefont
  {{Hartle}}(1972)}]{1972CMaPh..27..283H}%
  \BibitemOpen
  \bibfield  {author} {\bibinfo {author} {\bibnamefont {{Hawking}},
  \bibfnamefont {S.~W.}}\ and\ \bibinfo {author} {\bibnamefont {{Hartle}},
  \bibfnamefont {J.~B.}},\ }\bibfield  {title} {\enquote {\bibinfo {title}
  {{Energy and angular momentum flow into a black hole}},}\ }\href {\doibase
  10.1007/BF01645515} {\bibfield  {journal} {\bibinfo  {journal}
  {Communications in Mathematical Physics}\ }\textbf {\bibinfo {volume} {27}},\
  \bibinfo {pages} {283--290} (\bibinfo {year} {1972})}\BibitemShut {NoStop}%
\bibitem [{\citenamefont {{Hinder}}, \citenamefont {{Wardell}},\ and\
  \citenamefont {{Bentivegna}}(2011)}]{2011PhRvD..84b4036H}%
  \BibitemOpen
  \bibfield  {author} {\bibinfo {author} {\bibnamefont {{Hinder}},
  \bibfnamefont {I.}}, \bibinfo {author} {\bibnamefont {{Wardell}},
  \bibfnamefont {B.}}, \ and\ \bibinfo {author} {\bibnamefont {{Bentivegna}},
  \bibfnamefont {E.}},\ }\bibfield  {title} {\enquote {\bibinfo {title}
  {{Falloff of the Weyl scalars in binary black hole spacetimes}},}\ }\href
  {\doibase 10.1103/PhysRevD.84.024036} {\bibfield  {journal} {\bibinfo
  {journal} {\prd}\ }\textbf {\bibinfo {volume} {84}},\ \bibinfo {eid} {024036}
  (\bibinfo {year} {2011})},\ \Eprint {http://arxiv.org/abs/1105.0781}
  {arXiv:1105.0781 [gr-qc]} \BibitemShut {NoStop}%
\bibitem [{\citenamefont {{Kalnins}}, \citenamefont {{Miller}},\ and\
  \citenamefont {{Williams}}(1989)}]{1989JMP....30.2925K}%
  \BibitemOpen
  \bibfield  {author} {\bibinfo {author} {\bibnamefont {{Kalnins}},
  \bibfnamefont {E.~G.}}, \bibinfo {author} {\bibnamefont {{Miller}},
  \bibfnamefont {Jr., W.}}, \ and\ \bibinfo {author} {\bibnamefont
  {{Williams}}, \bibfnamefont {G.~C.}},\ }\bibfield  {title} {\enquote
  {\bibinfo {title} {{Teukolsky-Starobinsky identities for arbitrary spin}},}\
  }\href {\doibase 10.1063/1.528479} {\bibfield  {journal} {\bibinfo  {journal}
  {Journal of Mathematical Physics}\ }\textbf {\bibinfo {volume} {30}},\
  \bibinfo {pages} {2925--2929} (\bibinfo {year} {1989})}\BibitemShut {NoStop}%
\bibitem [{\citenamefont {{Kokkotas}}\ and\ \citenamefont
  {{Schmidt}}(1999)}]{1999LRR.....2....2K}%
  \BibitemOpen
  \bibfield  {author} {\bibinfo {author} {\bibnamefont {{Kokkotas}},
  \bibfnamefont {K.}}\ and\ \bibinfo {author} {\bibnamefont {{Schmidt}},
  \bibfnamefont {B.}},\ }\bibfield  {title} {\enquote {\bibinfo {title}
  {{Quasi-Normal Modes of Stars and Black Holes}},}\ }\href {\doibase
  10.12942/lrr-1999-2} {\bibfield  {journal} {\bibinfo  {journal} {Living
  Reviews in Relativity}\ }\textbf {\bibinfo {volume} {2}} (\bibinfo {year}
  {1999}),\ 10.12942/lrr-1999-2},\ \Eprint {http://arxiv.org/abs/gr-qc/9909058}
  {gr-qc/9909058} \BibitemShut {NoStop}%
\bibitem [{\citenamefont {{Leaver}}(1986)}]{1986JMP....27.1238L}%
  \BibitemOpen
  \bibfield  {author} {\bibinfo {author} {\bibnamefont {{Leaver}},
  \bibfnamefont {E.~W.}},\ }\bibfield  {title} {\enquote {\bibinfo {title}
  {{Solutions to a generalized spheroidal wave equation: Teukolsky's equations
  in general relativity, and the two-center problem in molecular quantum
  mechanics}},}\ }\href {\doibase 10.1063/1.527130} {\bibfield  {journal}
  {\bibinfo  {journal} {Journal of Mathematical Physics}\ }\textbf {\bibinfo
  {volume} {27}},\ \bibinfo {pages} {1238--1265} (\bibinfo {year}
  {1986})}\BibitemShut {NoStop}%
\bibitem [{\citenamefont {{Mason}}\ and\ \citenamefont
  {{Nicolas}}(1998)}]{1998CRASM.327..743M}%
  \BibitemOpen
  \bibfield  {author} {\bibinfo {author} {\bibnamefont {{Mason}}, \bibfnamefont
  {L.}}\ and\ \bibinfo {author} {\bibnamefont {{Nicolas}}, \bibfnamefont
  {J.}},\ }\bibfield  {title} {\enquote {\bibinfo {title} {{R{\'e}sultats
  globaux pour les {\'e}quations de Rarita-Schwinger en espace-temps d'Einstein
  asymptotiquement plats}},}\ }\href {\doibase 10.1016/S0764-4442(98)80162-6}
  {\bibfield  {journal} {\bibinfo  {journal} {Academie des Sciences Paris
  Comptes Rendus Serie Sciences Mathematiques}\ }\textbf {\bibinfo {volume}
  {327}},\ \bibinfo {pages} {743--748} (\bibinfo {year} {1998})}\BibitemShut
  {NoStop}%
\bibitem [{\citenamefont {{Mason}}\ and\ \citenamefont
  {{Nicolas}}(2012)}]{2012JGP....62..867M}%
  \BibitemOpen
  \bibfield  {author} {\bibinfo {author} {\bibnamefont {{Mason}}, \bibfnamefont
  {L.~J.}}\ and\ \bibinfo {author} {\bibnamefont {{Nicolas}}, \bibfnamefont
  {J.-P.}},\ }\bibfield  {title} {\enquote {\bibinfo {title} {{Peeling of Dirac
  and Maxwell fields on a Schwarzschild background}},}\ }\href {\doibase
  10.1016/j.geomphys.2012.01.005} {\bibfield  {journal} {\bibinfo  {journal}
  {Journal of Geometry and Physics}\ }\textbf {\bibinfo {volume} {62}},\
  \bibinfo {pages} {867--889} (\bibinfo {year} {2012})},\ \Eprint
  {http://arxiv.org/abs/1101.4333} {arXiv:1101.4333 [gr-qc]} \BibitemShut
  {NoStop}%
\bibitem [{\citenamefont {Montgomery}\ and\ \citenamefont
  {Vaughan}(2006)}]{montgomery:vaughan}%
  \BibitemOpen
  \bibfield  {author} {\bibinfo {author} {\bibnamefont {Montgomery},
  \bibfnamefont {H.~L.}}\ and\ \bibinfo {author} {\bibnamefont {Vaughan},
  \bibfnamefont {R.~C.}},\ }\href@noop {} {\emph {\bibinfo {title}
  {{Multiplicative Number Theory I: Classical Theory}}}},\ \bibinfo {series}
  {Cambridge Studies in Advanced Mathematics}\ No.~\bibinfo {number} {97}\
  (\bibinfo  {publisher} {Cambridge University Press},\ \bibinfo {year}
  {2006})\BibitemShut {NoStop}%
\bibitem [{\citenamefont {Olver}(1997)}]{MR1429619}%
  \BibitemOpen
  \bibfield  {author} {\bibinfo {author} {\bibnamefont {Olver}, \bibfnamefont
  {F.~W.~J.}},\ }\href@noop {} {\emph {\bibinfo {title} {Asymptotics and
  special functions}}},\ AKP Classics\ (\bibinfo  {publisher} {A K Peters,
  Ltd., Wellesley, MA},\ \bibinfo {year} {1997})\ pp.\ \bibinfo {pages}
  {xviii+572},\ \bibinfo {note} {reprint of the 1974 original [Academic Press,
  New York; MR0435697 (55 \#8655)]}\BibitemShut {NoStop}%
\bibitem [{\citenamefont {{Penrose}}(1965)}]{1965RSPSA.284..159P}%
  \BibitemOpen
  \bibfield  {author} {\bibinfo {author} {\bibnamefont {{Penrose}},
  \bibfnamefont {R.}},\ }\bibfield  {title} {\enquote {\bibinfo {title} {{Zero
  Rest-Mass Fields Including Gravitation: Asymptotic Behaviour}},}\ }\href
  {\doibase 10.1098/rspa.1965.0058} {\bibfield  {journal} {\bibinfo  {journal}
  {Proceedings of the Royal Society of London Series A}\ }\textbf {\bibinfo
  {volume} {284}},\ \bibinfo {pages} {159--203} (\bibinfo {year}
  {1965})}\BibitemShut {NoStop}%
\bibitem [{\citenamefont {{Press}}\ and\ \citenamefont
  {{Teukolsky}}(1973)}]{1973ApJ...185..649P}%
  \BibitemOpen
  \bibfield  {author} {\bibinfo {author} {\bibnamefont {{Press}}, \bibfnamefont
  {W.~H.}}\ and\ \bibinfo {author} {\bibnamefont {{Teukolsky}}, \bibfnamefont
  {S.~A.}},\ }\bibfield  {title} {\enquote {\bibinfo {title} {{Perturbations of
  a Rotating Black Hole. II. Dynamical Stability of the Kerr Metric}},}\ }\href
  {\doibase 10.1086/152445} {\bibfield  {journal} {\bibinfo  {journal} {\apj}\
  }\textbf {\bibinfo {volume} {185}},\ \bibinfo {pages} {649--674} (\bibinfo
  {year} {1973})}\BibitemShut {NoStop}%
\bibitem [{\citenamefont {Ronveaux}(1995)}]{ronveaux:heun}%
  \BibitemOpen
  \bibinfo {editor} {\bibnamefont {Ronveaux}, \bibfnamefont {A.}},\ ed.,\
  \href@noop {} {\emph {\bibinfo {title} {{Heun's Differential Equations}}}}\
  (\bibinfo  {publisher} {Oxford University Press},\ \bibinfo {year}
  {1995})\BibitemShut {NoStop}%
\bibitem [{\citenamefont {{Shlapentokh-Rothman}}(2015)}]{2015AnHP...16..289S}%
  \BibitemOpen
  \bibfield  {author} {\bibinfo {author} {\bibnamefont {{Shlapentokh-Rothman}},
  \bibfnamefont {Y.}},\ }\bibfield  {title} {\enquote {\bibinfo {title}
  {{Quantitative Mode Stability for the Wave Equation on the Kerr
  Spacetime}},}\ }\href {\doibase 10.1007/s00023-014-0315-7} {\bibfield
  {journal} {\bibinfo  {journal} {Annales Henri Poincar{\'e}}\ }\textbf
  {\bibinfo {volume} {16}},\ \bibinfo {pages} {289--345} (\bibinfo {year}
  {2015})},\ \Eprint {http://arxiv.org/abs/1302.6902} {arXiv:1302.6902 [gr-qc]}
  \BibitemShut {NoStop}%
\bibitem [{\citenamefont {{Silva-Ortigoza}}(1995)}]{1995JMP....36.6929S}%
  \BibitemOpen
  \bibfield  {author} {\bibinfo {author} {\bibnamefont {{Silva-Ortigoza}},
  \bibfnamefont {G.}},\ }\bibfield  {title} {\enquote {\bibinfo {title}
  {{Killing spinors and separability of Rarita-Schwinger's equation in type
  $\{$2,2$\}$ backgrounds}},}\ }\href {\doibase 10.1063/1.531199} {\bibfield
  {journal} {\bibinfo  {journal} {Journal of Mathematical Physics}\ }\textbf
  {\bibinfo {volume} {36}},\ \bibinfo {pages} {6929--6936} (\bibinfo {year}
  {1995})}\BibitemShut {NoStop}%
\bibitem [{\citenamefont {Slavyanov}\ and\ \citenamefont
  {Lay}(2000)}]{slavyonov:lay}%
  \BibitemOpen
  \bibfield  {author} {\bibinfo {author} {\bibnamefont {Slavyanov},
  \bibfnamefont {S.~Y.}}\ and\ \bibinfo {author} {\bibnamefont {Lay},
  \bibfnamefont {W.}},\ }\href@noop {} {\emph {\bibinfo {title} {Special
  functions}}},\ Oxford Mathematical Monographs\ (\bibinfo  {publisher} {Oxford
  University Press, Oxford},\ \bibinfo {year} {2000})\ pp.\ \bibinfo {pages}
  {xvi+293},\ \bibinfo {note} {a unified theory based on singularities, With a
  foreword by Alfred Seeger, Oxford Science Publications}\BibitemShut {NoStop}%
\bibitem [{\citenamefont {{Starobinski{\v i}}}\ and\ \citenamefont
  {{Churilov}}(1974)}]{1974JETP...38....1S}%
  \BibitemOpen
  \bibfield  {author} {\bibinfo {author} {\bibnamefont {{Starobinski{\v i}}},
  \bibfnamefont {A.~A.}}\ and\ \bibinfo {author} {\bibnamefont {{Churilov}},
  \bibfnamefont {S.~M.}},\ }\bibfield  {title} {\enquote {\bibinfo {title}
  {{Amplification of electromagnetic and gravitational waves scattered by a
  rotating ``black hole''}},}\ }\href@noop {} {\bibfield  {journal} {\bibinfo
  {journal} {Soviet Journal of Experimental and Theoretical Physics}\ }\textbf
  {\bibinfo {volume} {38}},\ \bibinfo {pages} {1} (\bibinfo {year}
  {1974})}\BibitemShut {NoStop}%
\bibitem [{\citenamefont {{Teukolsky}}(1972)}]{1972PhRvL..29.1114T}%
  \BibitemOpen
  \bibfield  {author} {\bibinfo {author} {\bibnamefont {{Teukolsky}},
  \bibfnamefont {S.~A.}},\ }\bibfield  {title} {\enquote {\bibinfo {title}
  {{Rotating Black Holes: Separable Wave Equations for Gravitational and
  Electromagnetic Perturbations}},}\ }\href {\doibase
  10.1103/PhysRevLett.29.1114} {\bibfield  {journal} {\bibinfo  {journal}
  {Physical Review Letters}\ }\textbf {\bibinfo {volume} {29}},\ \bibinfo
  {pages} {1114--1118} (\bibinfo {year} {1972})}\BibitemShut {NoStop}%
\bibitem [{\citenamefont {{Teukolsky}}(1973)}]{teukolsky:1973}%
  \BibitemOpen
  \bibfield  {author} {\bibinfo {author} {\bibnamefont {{Teukolsky}},
  \bibfnamefont {S.~A.}},\ }\bibfield  {title} {\enquote {\bibinfo {title}
  {{Perturbations of a Rotating Black Hole. I. Fundamental Equations for
  Gravitational, Electromagnetic, and Neutrino-Field Perturbations}},}\ }\href
  {\doibase 10.1086/152444} {\bibfield  {journal} {\bibinfo  {journal}
  {Astrophysical J.}\ }\textbf {\bibinfo {volume} {185}},\ \bibinfo {pages}
  {635--648} (\bibinfo {year} {1973})}\BibitemShut {NoStop}%
\bibitem [{\citenamefont {{Teukolsky}}\ and\ \citenamefont
  {{Press}}(1974)}]{1974ApJ...193..443T}%
  \BibitemOpen
  \bibfield  {author} {\bibinfo {author} {\bibnamefont {{Teukolsky}},
  \bibfnamefont {S.~A.}}\ and\ \bibinfo {author} {\bibnamefont {{Press}},
  \bibfnamefont {W.~H.}},\ }\bibfield  {title} {\enquote {\bibinfo {title}
  {{Perturbations of a rotating black hole. III - Interaction of the hole with
  gravitational and electromagnetic radiation}},}\ }\href {\doibase
  10.1086/153180} {\bibfield  {journal} {\bibinfo  {journal} {\apj}\ }\textbf
  {\bibinfo {volume} {193}},\ \bibinfo {pages} {443--461} (\bibinfo {year}
  {1974})}\BibitemShut {NoStop}%
\bibitem [{\citenamefont {{Torres del Castillo}}\ and\ \citenamefont
  {{Silva-Ortigoza}}(1992)}]{1992PhRvD..46.5395T}%
  \BibitemOpen
  \bibfield  {author} {\bibinfo {author} {\bibnamefont {{Torres del Castillo}},
  \bibfnamefont {G.~F.}}\ and\ \bibinfo {author} {\bibnamefont
  {{Silva-Ortigoza}}, \bibfnamefont {G.}},\ }\bibfield  {title} {\enquote
  {\bibinfo {title} {{Spin-3/2 perturbations of the Kerr-Newman solution}},}\
  }\href {\doibase 10.1103/PhysRevD.46.5395} {\bibfield  {journal} {\bibinfo
  {journal} {\prd}\ }\textbf {\bibinfo {volume} {46}},\ \bibinfo {pages}
  {5395--5398} (\bibinfo {year} {1992})}\BibitemShut {NoStop}%
\bibitem [{\citenamefont {Wang}\ and\ \citenamefont {Guo}(1989)}]{MR1034956}%
  \BibitemOpen
  \bibfield  {author} {\bibinfo {author} {\bibnamefont {Wang}, \bibfnamefont
  {Z.~X.}}\ and\ \bibinfo {author} {\bibnamefont {Guo}, \bibfnamefont
  {D.~R.}},\ }\href {\doibase 10.1142/0653} {\emph {\bibinfo {title} {Special
  functions}}}\ (\bibinfo  {publisher} {World Scientific Publishing Co., Inc.,
  Teaneck, NJ},\ \bibinfo {year} {1989})\ pp.\ \bibinfo {pages} {xviii+695},\
  \bibinfo {note} {translated from the Chinese by Guo and X. J.
  Xia}\BibitemShut {NoStop}%
\bibitem [{\citenamefont {{Whiting}}(1989)}]{whiting:1989}%
  \BibitemOpen
  \bibfield  {author} {\bibinfo {author} {\bibnamefont {{Whiting}},
  \bibfnamefont {B.~F.}},\ }\bibfield  {title} {\enquote {\bibinfo {title}
  {{Mode stability of the Kerr black hole}},}\ }\href {\doibase
  10.1063/1.528308} {\bibfield  {journal} {\bibinfo  {journal} {Journal of
  Mathematical Physics}\ }\textbf {\bibinfo {volume} {30}},\ \bibinfo {pages}
  {1301--1305} (\bibinfo {year} {1989})}\BibitemShut {NoStop}%
\bibitem [{\citenamefont {{Znajek}}(1977)}]{1977MNRAS.179..457Z}%
  \BibitemOpen
  \bibfield  {author} {\bibinfo {author} {\bibnamefont {{Znajek}},
  \bibfnamefont {R.~L.}},\ }\bibfield  {title} {\enquote {\bibinfo {title}
  {{Black hole electrodynamics and the Carter tetrad}},}\ }\href {\doibase
  10.1093/mnras/179.3.457} {\bibfield  {journal} {\bibinfo  {journal} {\mnras}\
  }\textbf {\bibinfo {volume} {179}},\ \bibinfo {pages} {457--472} (\bibinfo
  {year} {1977})}\BibitemShut {NoStop}%
\end{thebibliography}
%

\end{document}